\documentclass[10pt,onecolumn]{IEEEtran}
\usepackage{soul}
\usepackage[mathscr]{eucal}
\usepackage[cmex10]{amsmath}
\usepackage{epsfig,epsf}
\usepackage{amssymb,amsmath,amsthm,amsfonts,latexsym}
\usepackage{amsmath,graphicx,bm,xcolor,url,overpic}
\usepackage{fixltx2e}
\usepackage{array}
\usepackage{verbatim}
\usepackage{bm}
\usepackage{algorithmic}
\usepackage{algorithm}
\usepackage{verbatim}
\usepackage{textcomp}
\usepackage{mathrsfs}
\usepackage{epstopdf}

\newcommand{\openone}{\leavevmode\hbox{\small1\normalsize\kern-.33em1}}

\catcode`~=11 \def\UrlSpecials{\do\~{\kern -.15em\lower .7ex\hbox{~}\kern .04em}} \catcode`~=13

\allowdisplaybreaks[4]

\newcommand{\nn}{\nonumber}

\newcommand{\calB}{\mathcal{B}}

\newcommand{\calK}{\mathcal{K}}
\newcommand{\calL}{\mathcal{L}}
\newcommand{\calM}{\mathcal{M}}

\newcommand{\calP}{\mathcal{P}}

\newcommand{\calR}{\mathcal{R}}

\newcommand{\calU}{\mathcal{U}}

\newcommand{\calX}{\mathcal{X}}

\newcommand{\bT}{\mathbf{T}}

\newcommand{\bz}{\mathbf{z}}
\newcommand{\bZ}{\mathbf{Z}}

\newcommand{\rmb}{\mathrm{b}}

\newcommand{\rmc}{\mathrm{c}}

\newcommand{\bbE}{\mathbb{E}}

\newcommand{\bbN}{\mathbb{N}}

\newcommand{\bbR}{\mathbb{R}}

\DeclareMathAlphabet{\mathbsf}{OT1}{cmss}{bx}{n}
\DeclareMathAlphabet{\mathssf}{OT1}{cmss}{m}{sl}

\DeclareSymbolFont{bsfletters}{OT1}{cmss}{bx}{n}
\DeclareSymbolFont{ssfletters}{OT1}{cmss}{m}{n}
\DeclareMathSymbol{\bsfGamma}{0}{bsfletters}{'000}
\DeclareMathSymbol{\ssfGamma}{0}{ssfletters}{'000}
\DeclareMathSymbol{\bsfDelta}{0}{bsfletters}{'001}
\DeclareMathSymbol{\ssfDelta}{0}{ssfletters}{'001}
\DeclareMathSymbol{\bsfTheta}{0}{bsfletters}{'002}
\DeclareMathSymbol{\ssfTheta}{0}{ssfletters}{'002}
\DeclareMathSymbol{\bsfLambda}{0}{bsfletters}{'003}
\DeclareMathSymbol{\ssfLambda}{0}{ssfletters}{'003}
\DeclareMathSymbol{\bsfXi}{0}{bsfletters}{'004}
\DeclareMathSymbol{\ssfXi}{0}{ssfletters}{'004}
\DeclareMathSymbol{\bsfPi}{0}{bsfletters}{'005}
\DeclareMathSymbol{\ssfPi}{0}{ssfletters}{'005}
\DeclareMathSymbol{\bsfSigma}{0}{bsfletters}{'006}
\DeclareMathSymbol{\ssfSigma}{0}{ssfletters}{'006}
\DeclareMathSymbol{\bsfUpsilon}{0}{bsfletters}{'007}
\DeclareMathSymbol{\ssfUpsilon}{0}{ssfletters}{'007}
\DeclareMathSymbol{\bsfPhi}{0}{bsfletters}{'010}
\DeclareMathSymbol{\ssfPhi}{0}{ssfletters}{'010}
\DeclareMathSymbol{\bsfPsi}{0}{bsfletters}{'011}
\DeclareMathSymbol{\ssfPsi}{0}{ssfletters}{'011}
\DeclareMathSymbol{\bsfOmega}{0}{bsfletters}{'012}
\DeclareMathSymbol{\ssfOmega}{0}{ssfletters}{'012}

\newcommand{\tilk}{\tilde{k}}

\newcommand{\tilP}{\tilde{P}}

\newcommand{\tils}{\tilde{s}}

\newcommand{\hatx}{\hat{x}}
\newcommand{\hatX}{\hat{X}}
\newcommand{\tilx}{\tilde{x}}
\newcommand{\tilX}{\tilde{X}}

\newtheorem{theorem}{Theorem}
\newtheorem{lemma}[theorem]{Lemma}

\newtheorem{corollary}[theorem]{Corollary}
\newtheorem{definition}{Definition}

\theoremstyle{remark}

\graphicspath{{./figures/}}
\usepackage{graphicx,cite}
\usepackage{epstopdf}
\usepackage{tikz}
\usepackage{stfloats}
\usepackage{url}
\usetikzlibrary{arrows.meta, positioning, shapes, calc, fit,backgrounds,shadows}
\usepackage{enumerate}
\usepackage{bbm}
\usepackage{graphicx}
\usepackage{caption}
\usepackage[ colorlinks = true,
linkcolor = blue,
urlcolor  = blue,
citecolor = red,
anchorcolor = green,]{hyperref}
\usepackage{soul,color}
\usepackage{multirow}
\usepackage{amsthm}
\theoremstyle{plain}

\IEEEoverridecommandlockouts
\soulregister\em7
\soulregister\cite7
\soulregister\ref7
\soulregister\eqref7
\soulregister\underline7
\soulregister\emph7
\soulregister\footnote7

\definecolor{Dyellow}{RGB}{254,152,0}
\definecolor{Dgreen}{RGB}{0,176,80}

\title{Successive Refinement Under Strong-Sense Perfect Perception}
\author{Yu Yang, Changhong Liu, Weijie Yuan, and Lin Zhou%
\thanks{The authors are with the School of Automation and Intelligent Manufacturing,
Southern University of Science and Technology, Shenzhen, China.}}

\begin{document}
\maketitle

\begin{abstract}
We revisit a multiterminal lossy source coding problem named successive refinement and derive the rate-distortion-perception region under the strong-sense perfect perception constraint in the presence of unlimited common randomness. Specifically, in successive refinement, one aims to compress a source sequence and allows two distinct decoders to recover the source sequence at different distortion levels. By imposing the strong-sense perfect perception constraint, our results refine the previous result by analyzing the impact of the perceptual quality. Our achievability proof is inspired by output constrained lossy source coding and our converse proof adapts the proof steps of the standard successive refinement problem. Furthermore, we provide a numerical example of the Bernoulli source to illustrate our result and show that the Bernoulli source under Hamming distortion is successively refinable even with the strong-sense perfect perception constraint.
\end{abstract}

\begin{IEEEkeywords}
Common randomness, Perceptual quality, Lossy compression, Random coding, Shannon theory   
\end{IEEEkeywords}

\section{Introduction}
\label{sec:intro}
Lossy compression, also known as rate-distortion (RD), reduces the number of bits needed to compress a source sequence by allowing the reconstructed sequence to be different from the source sequence, where the difference is characterized via a distortion level. Classical RD theory characterizes the asymptotic minimum achievable rate under a prescribed distortion level~\cite{Shannon_1959_Coding-theorems-for-a-discrete-source-with-a-fidelity-criterion}. However, for certain applications including image compression, reducing the distortion level does not always improve the reconstruction quality. As observed by Blau and Michaeli, using the RD theory could lead to image reconstructions that are either blurred or lack natural details~\cite{Blau_2018_The-Perception-Distortion-Tradeoff,Blau_2019_The-Rate-Distortion-Perception-Tradeoff}, which leads to poor perceptual quality for humans. The above limitation has become particularly severe in generative compression, where the perceptual quality is treated as an explicit design objective beyond the compression rate and the distortion level, especially at low bit rates \cite{Agustsson_2019_Generative_Adversarial_Networks_for_Extreme_Learned_Image_Compression,Mentzer_2020_High-Fidelity-Generative-Image-Compression}.

To refine the RD theory, Blau and Michaeli pioneered the rate-distortion-perception (RDP) theory~\cite{Blau_2018_The-Perception-Distortion-Tradeoff,Blau_2019_The-Rate-Distortion-Perception-Tradeoff} and proposed the RDP function as the corresponding minimal compression rate. In particular, the RDP theory introduces an additional perception constraint that compares the distributions of the source sequence and the reconstructed version. The distortion constraint evaluates how closely a reconstructed sequence represents its corresponding source sequence, while the perception constraint evaluates whether the reconstructed sequence has similar statistical behavior as the source sequence. A lossy image compression system should balance the compression rate, the distortion level, and the perceptual quality. When the reconstructed sequence has the same distribution as the source distribution, the compression is said to achieve perfect perception.

Although the RDP function was proposed in~\cite{Blau_2018_The-Perception-Distortion-Tradeoff,Blau_2019_The-Rate-Distortion-Perception-Tradeoff}, its operational meaning was subsequently revealed by Theis and Wagner~\cite{Theis_2021_A-coding-theorem-for-the-rate-distortion-perception-function}, who used randomized coding schemes to show that the RDP function is achievable for lossy compression with both the distortion and perception constraints. Subsequently, Chen \emph{et al.} studied the RDP theory under different perception constraints and randomness assumptions \cite{chen_2022_on-the-rate-distortion-perception-function}. In particular, the authors of \cite{chen_2022_on-the-rate-distortion-perception-function} distinguished two notions of perfect perception. Weak-sense perfect perception requires each reconstructed  source symbol to have the same distribution as each corresponding source symbol, whereas strong-sense perfect perception requires the entire reconstructed source sequence to have the same distribution as the original source sequence. Furthermore, Chen \emph{et al.} showed that, when unlimited common randomness is shared between the encoder and decoder, the strong-sense perception incurs no rate penalty compared with the weak-sense perception. We should like to comment that the RDP theory is closely related to output constrained lossy source coding studied by Saldi \emph{et al.}~\cite{Saldi_2015_Output_Constrained,Saldi_2015_Randomized_quantization}.

Despite being insightful, the above results were restricted to the point-to-point (P2P) setting with a single encoder and decoder. However, in many practical applications, one needs to serve multiple decoders with different reconstruction requirements. For example, recent progressive learned image compression methods allow a decoder to obtain an initial reconstruction from part of the compressed bits and improve its quality as additional compressed bits are received \cite{Hojjat_2023_ProgDTD-Progressive-Learned-Image-Compression-with-Double-Tail-Drop-Training,Presta_2025_Efficient-Progressive-Image-Compression-with-Variance-Aware-Masking}. The corresponding information theoretic model for this scenario is successive refinement (SR)~\cite{EquitzC_1991_Successive-refinement-of-information,Rimoldi_1994_Successive-refinement-of-information-characterization-of-the-achievable-rates}. In this model, a source sequence is compressed into a base-layer message and a refinement-layer message. The base-layer message allows the first decoder to produce a coarse reconstruction, while the second decoder uses both messages to produce a reconstruction with a lower distortion. Such a layered coding framework enables the same encoder to support multiple decoders with different distortion levels.

Recently, the RDP theory has been extended to the SR under the weak-sense perfect perception constraint by Zhang \emph{et al.}~\cite{Zhang_2025_Universal-Rate-Distortion-Perception-Representations-for-Lossy-Compression}. However, the strong-sense perfect perception case has not yet been addressed. In this paper, we fill the above research gap and characterize the first-order asymptotic RDP region for SR under strong-sense perfect perception with unlimited common randomness. Our main result shows that the classical two-layer rate structure is preserved, with the perception requirements appearing only as restrictions on the admissible reconstruction distributions. Our technical contribution lies in the achievability proof. Specifically, we develop a superposition-based output synthesis scheme for the two-layer compression architecture of SR. A layered soft covering argument approximates the source distribution at the classical SR rates, while a maximal coupling correction enforces the exact distribution required by strong-sense perfect perception with vanishing distortion loss and no additional compression rate. Furthermore, for any Bernoulli source under Hamming distortion, we calculate the explicit RDP region and show that the source-distortion tuple remains successively refinable \cite{EquitzC_1991_Successive-refinement-of-information,Koshelev_1981_Estimation_Mean_Error} under an additional strong-sense perfect perception constraint.

\section{Problem Formulation and Definitions}
\label{sec:Problem Formulation}

\subsection*{Notation}
Random variables are in capital case (e.g., $X$) and their realizations are in lower case (e.g., $x$). We use calligraphic font (e.g., $\mathcal{X}$) to denote all sets. We use $\bbR_+$ and $\bbN$ to denote the sets of nonnegative real numbers and positive integers, respectively. Random vectors of length $n$ and their particular realizations are denoted by $X^n:= (X_1, \ldots, X_n)$ and $x^n:=(x_1,\ldots,x_n)$, respectively. All logarithms are base $2$. For any integer $a\in\bbN$, we use $[a]$ to denote $[1:a]$. The set of all probability distributions on a set $\calX$ is denoted as $\calP(\calX)$. We use $\mathbbm{1}(\cdot)$ to denote the indicator function. Finally, we follow~\cite{Gamal_2011_Network-Information-Theory} for notation of information-theoretic quantities.

\subsection{Problem Formulation}
Fix three positive integers $(n,M_1,M_2)\in\bbN^3$ and two nonnegative real numbers $(D_1,D_2)\in\mathbb R_+^2$. Fix a source distribution $P_X \in \mathcal{P}(\mathcal{X})$ defined on a finite alphabet $\mathcal{X}$. Consider a memoryless source sequence $X^n$ that is generated i.i.d. from $P_X$. Let $K$ be a common random variable taking values in an alphabet $\mathcal{K}$. As shown in Fig.~\ref{fig:successive-refinement}, in the SR problem with both distortion and strong-sense perfect perception constraints, one aims to compress the source sequence $X^n$ into a base-layer message $S_1\in [M_1]$ and a refinement-layer message $S_2\in [M_2]$ such that the source sequence $X^n$ is reconstructed as $(\hat X_1^n,\hat X_2^n)\in \mathcal{X}^n \times \mathcal{X}^n$ within distortion and strong-sense perfect perception levels $(D_1,P_X)$ and $(D_2,P_X)$, respectively. The encoding is done via an encoder $f$ and decoding is performed by two decoders $(\phi_1,\phi_2)$. We assume that the encoder and both decoders share unlimited common randomness, so no rate constraint is imposed on $K$.

In short, one aims to progressively compress a memoryless source sequence using a base layer and a refinement layer, such that the first decoder produces a coarse reconstruction and the second decoder produces a refined reconstruction in terms of distortion levels, while both reconstructions satisfy the strong-sense perfect perception constraint.

\begin{figure}[t]
\centering

\def\EncoderWidth{1.95cm}
\def\EncoderHeight{1.5cm}

\begin{tikzpicture}[
x=1cm,
y=1cm,
font=\large,
>={Latex[length=2.2mm,width=1.5mm]},
wire/.style={
line width=0.8pt
},
flow/.style={
->,
line width=0.8pt
},
crflow/.style={
->,
line width=0.8pt,
dashed
},
block/.style={
draw,
line width=0.8pt,
minimum width=1.90cm,
minimum height=0.80cm,
inner sep=0pt,
font=\Large
},
encoder/.style={
draw,
line width=0.8pt,
minimum width=\EncoderWidth,
minimum height=\EncoderHeight,
inner sep=0pt,
font=\Large
},
outerbox/.style={
draw,
line width=0.2pt,
inner sep=6.0pt
},
signal/.style={
fill=white,
inner sep=1.5pt,
font=\large
}
]

\node[encoder] (enc) at (3.35,0.10) {$f$};

\node[block] (phi1) at (8.15, 0.75) {$\phi_1$};
\node[block] (phi2) at (8.15,-0.55) {$\phi_2$};

\node[outerbox, fit=(phi1)(phi2)] (decbox) {};

\node[
block,
minimum width=2.40cm,
minimum height=0.70cm
] (rc) at (5.75,2.25)
{$K$};

\draw[crflow] (rc.south) -- (enc.north);
\draw[crflow] (rc.south) -- (decbox.north);

\coordinate (input-left) at (0.00,0.10);

\draw[flow]
(input-left)
--
node[above=2pt,pos=0.32] {$X^n\sim P_X^n$}
(enc.west);
\coordinate (encout1) at (enc.east |- phi1.west);
\coordinate (encout2) at (enc.east |- phi2.west);

\draw[flow]
(encout1)
--
node[signal,above=2pt,pos=0.48] {$S_1$}
(phi1.west);

\draw[flow]
(encout2)
--
node[signal,above=2pt,pos=0.46] {$S_2$}
(phi2.west);

\coordinate (m1branch)
at ($(phi1.west)+(-0.58,0)$);

\coordinate (phi2upperinput)
at ($(phi2.west)+(0,0.22)$);
\draw[flow]
(m1branch)
|-
(phi2upperinput);

\draw[flow]
(phi1.east)
--
++(3.00,0)
node[above=3pt,pos=0.56]
{$(\hat X_1^n\sim P_X^n,\,D_1)$};

\draw[flow]
(phi2.east)
--
++(3.00,0)
node[above=3pt,pos=0.56]
{$(\hat X_2^n\sim P_X^n,\,D_2)$};

\end{tikzpicture}

\caption{The SR randomized source coding system under strong-sense perfect perception with unlimited common randomness.}
\label{fig:successive-refinement}
\end{figure}
\subsection{Definitions}
A randomized SR code is formally defined as follows.

\begin{definition}
An $(n,M_1,M_2)$-code consists of one encoder
\begin{align}
f:\mathcal X^n\times\mathcal K
\to\mathcal P([M_1]\times[M_2]),
\end{align}
and two decoders
\begin{align}
\phi_1:[M_1]\times\mathcal K
&\to\mathcal P(\mathcal X^n),\\
\phi_2:[M_1]\times[M_2]\times\mathcal K
&\to\mathcal P(\mathcal X^n).
\end{align}
\end{definition}
The code definition differs from the classical SR model~\cite{Rimoldi_1994_Successive-refinement-of-information-characterization-of-the-achievable-rates} in the following two respects. On the one hand, whereas the formulation in~\cite{Rimoldi_1994_Successive-refinement-of-information-characterization-of-the-achievable-rates} employs two encoders that separately generate $S_1\in[M_1]$ and $S_2\in[M_2]$, we use an equivalent single-encoder representation that jointly generates $(S_1,S_2)$ while preserving the layered decoding structure. On the other hand, consistent with randomized coding formulations~\cite{Saldi_2015_Output_Constrained,Cuff_2013_Distributed_Channel_Synthesis}, we allow both the encoder and the decoders to be stochastic to satisfy the strong-sense perfect perception constraints.



To evaluate the distortion constraints, consider the following two bounded distortion measures: $\Delta_1:\mathcal{X}\times\mathcal{X}\to[0,\infty)$,  $\Delta_2:\mathcal{X}\times\mathcal{X}\to[0,\infty)$ such that for each $x\in\mathcal{X}$, there exists $(\hat x_1,\hat x_2)\in\mathcal{X}\times\mathcal{X}$ satisfying $\Delta_1(x,\hat x_1)=0$ and $\Delta_2(x,\hat x_2)=0$. For each $i\in[2]$, the corresponding normalized $n$-letter distortion measure is defined as
\begin{align}
\Delta_i^{(n)}(X^n,\hat X_i^n):= \frac{1}{n}\sum_{t\in[n]}\Delta_i(X_t,\hat X_{i,t}).\label{normalized-n-letter-distortion}
\end{align}
For any random variable $Z$ taking values in $\mathcal{Z}$ and any $P_Z\in\mathcal{P}(\mathcal{Z})$, we use $Z\sim P_Z$ to denote that $Z$ is distributed according to $P_Z$. Accordingly,
$Z^n\sim P_Z^n$ denotes that $Z^n$ follows the product distribution $P_Z^n$. The RDP region for SR under strong-sense perfect perception is defined as follows.
\begin{definition}\label{Definition:usp}
Given any nonnegative real numbers $(R_1,R_2)$, a rate pair $(R_1,R_2)\in\mathbb R_+^2$ is said to be achievable if there exists a sequence of $(n,M_1,M_2)$-codes such that
\begin{align}
\limsup_{n\to\infty}\frac{1}{n}\log M_1
&\le R_1,\\
\limsup_{n\to\infty}\frac{1}{n}\log(M_1M_2)
&\le R_2, \label{eq: R_2}
\end{align}
and, for each $i\in[2]$,
\begin{align}
\limsup_{n\to\infty}\bbE\!\left[
\Delta_i^{(n)}(X^n,\hat X_i^n)
\right]
&\le D_i,\label{eq:def distortion constraint}\\
\hat X_i^n
&\sim P_X^n.\label{eq:def perception constraint}
\end{align}
The convex closure of the set of all achievable rate pairs is called the RDP region and is denoted by $\calR_{\infty}(D_1,D_2|P_X)$\footnote{The subscript $\infty$ denotes unlimited common randomness shared by the encoder and two decoders.}.
\end{definition}
Consistent with the original study of Rimoldi~\cite{Rimoldi_1994_Successive-refinement-of-information-characterization-of-the-achievable-rates} and subsequent studies~\cite{Steinberg_2004_Successive_Refinement,Zhou_2017_Second_Order}, in \eqref{eq: R_2}, we use $R_2$ to denote the sum rate.
We next define the notion of successive refinability~\cite{EquitzC_1991_Successive-refinement-of-information,Koshelev_1981_Estimation_Mean_Error} under strong-sense perfect perception. To do so, we first recall the following RDP function for the P2P case. For each $i\in[2]$, given distortion level $D\in \bbR_+$, the P2P RDP function for perfect perception with distortion function $\Delta_i$ is given by~\cite[Eq. (13)]{chen_2022_on-the-rate-distortion-perception-function}
\begin{align}
R_i(D|P_X):= \inf_{\substack{P_{\hat{X}_i|X}: \,P_{\hat{X}_i}=P_X, \\ \bbE[\Delta_i(X, \hat{X}_i)] \le D}} I(X; \hat{X}_i).\label{RDP Function}
\end{align}
\begin{definition}
Given a source distribution $P_X$, two distortion measures
$\Delta_1$ and $\Delta_2$, and a distortion pair $(D_1,D_2)\in\bbR_+^2$ such that $D_1>D_2$,
the source-distortion tuple $(P_X,\Delta_1,\Delta_2)$ is said to be
$(D_1,D_2)$-successively refinable under strong-sense perfect
perception if
\begin{align}
\bigl(R_1(D_1|P_X),R_2(D_2|P_X)\bigr)
\in \mathcal{R}_{\infty}(D_1,D_2|P_X).
\end{align}
The source-distortion tuple $(P_X,\Delta_1,\Delta_2)$ is said to be
successively refinable under strong-sense perfect perception if the
above condition holds for any $(D_1,D_2)\in\bbR_+^2$ such that $D_1>D_2$.
\end{definition}
In a nutshell, successive refinability implies that layered coding incurs no additional sum rate loss relative to separately optimal P2P coding for two distortion levels.

\section{Main Result}
\label{sec:MAIN RESULTS}
Fix any $(D_1,D_2)\in\bbR_+^2$ such that $D_1>D_2$. Define the following set of distributions:
\begin{align}
\mathcal{M}(D_1,D_2)
:=\Big\{
P_{X\hat X_1\hat X_2}\in\calP(\calX^3)\colon {}
P_{\hat X_1}=P_{\hat X_2}=P_X,~\bbE[\Delta_1(X,\hat X_1)] \le D_1,~
\bbE[\Delta_2(X,\hat X_2)] \le D_2\Big\}.\label{def:M(D1,D2)}
\end{align}
Given any distribution $P_{X\hat X_1\hat X_2}\in \mathcal{M}(D_1,D_2)$, define the following set of rate pairs:
\begin{align}
\calR(P_{X\hat X_1\hat X_2})
&:=\Big\{(R_1,R_2)\in\bbR_+^2:~R_1 \ge I(X;\hat X_1),~R_2 \ge I(X;\hat X_1,\hat X_2)\Big\}.
\end{align}

\begin{theorem}\label{Theorem:1}
The RDP region for SR under strong-sense perfect perception satisfies
\begin{align}
\calR_{\infty}(D_1,D_2|P_X)=\bigcup_{P_{X\hat X_1\hat X_2}\in \mathcal{M}(D_1,D_2)}\calR(P_{X\hat X_1\hat X_2}).
\end{align}
\end{theorem}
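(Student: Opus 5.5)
We prove the two inclusions separately. Since the right-hand side is already convex (time sharing between two codes preserves exact product marginals only if we are careful, so we instead verify directly that the union is convex by a time-sharing random variable absorbed into the distribution, or simply note that the RDP region is defined as a convex closure), it suffices to show that every point in $\calR(P_{X\hat X_1\hat X_2})$ with $P_{X\hat X_1\hat X_2}\in\calM(D_1,D_2)$ is achievable, and that every achievable pair lies in the closed convex set on the right.

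\emph{Converse.} Fix an achievable sequence of codes and let $K$ be the common randomness. Since $K$ is independent of $X^n$, we have $\log M_1\ge H(S_1|K)\ge I(X^n;S_1|K)=I(X^n;S_1,K)\ge I(X^n;\hat X_1^n)$ by data processing, because $\hat X_1^n$ is generated from $(S_1,K)$. Similarly $\log(M_1M_2)\ge I(X^n;S_1,S_2,K)\ge I(X^n;\hat X_1^n,\hat X_2^n)$. Since $X^n$ is i.i.d., $I(X^n;\hat X_1^n,\hat X_2^n)\ge\sum_t I(X_t;\hat X_{1,t},\hat X_{2,t})$, and analogously for the first layer. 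Introduce a time-sharing variable $Q$ uniform on $[n]$ independent of everything, and set $(X,\hat X_1,\hat X_2)=(X_Q,\hat X_{1,Q},\hat X_{2,Q})$. Then $X\sim P_X$ is independent of $Q$, so $\frac1n\sum_t I(X_t;\hat X_{1,t},\hat X_{2,t})=I(X;\hat X_1,\hat X_2|Q)\ge I(X;\hat X_1,\hat X_2)$. The strong-sense constraint $\hat X_i^n\sim P_X^n$ gives $\hat X_{i,t}\sim P_X$ for every $t$, hence $P_{\hat X_i}=P_X$. Linearity of the distortion gives $\bbE[\Delta_i(X,\hat X_i)]=\bbE[\Delta_i^{(n)}]$. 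Taking limits requires continuity: the set $\calM(D_1,D_2)$ is compact and the mutual information is continuous on the finite alphabet. We therefore extract a convergent subsequence of the single-letter distributions, whose limit lies in $\calM(D_1,D_2)$ (the distortion bound holding with $\limsup\le D_i$), and lower semicontinuity yields $(R_1,R_2)$ in the union. Convexity of the union follows because convex combinations of distributions in $\calM$ stay in $\calM$ and the mutual information terms are convex in $P_{\hat X_1\hat X_2|X}$ for fixed $P_X$. Hence the convex closure adds nothing.

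\emph{Achievability.} Fix $P_{X\hat X_1\hat X_2}\in\calM(D_1,D_2)$ and rates $R_1>I(X;\hat X_1)$, $R_2>I(X;\hat X_1\hat X_2)$. We use a superposition likelihood encoder with common randomness. Generate $2^{nR_1}\cdot 2^{nR_1'}$ codewords $\hat x_1^n\sim P_X^n$, indexed by a message and a common-randomness index. For each of these, generate $2^{n(R_2-R_1)}\cdot 2^{nR_2'}$ codewords $\hat x_2^n\sim P_{\hat X_2|\hat X_1}^n$. The encoder selects the indices via the likelihood encoder $\propto P^n_{X|\hat X_1\hat X_2}(x^n|\cdot)$, and the decoders output the indexed codewords. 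Layered soft-covering (Cuff's superposition soft-covering lemma) shows that the induced joint distribution of $(X^n,\hat X_1^n,\hat X_2^n)$ is within vanishing total variation of $P^n_{X\hat X_1\hat X_2}$ once $R_1+R_1'>I(X;\hat X_1)$ and $R_2+R_1'+R_2'>I(X;\hat X_1\hat X_2)$. Since the common-randomness rates are unlimited, these conditions hold. Averaging over codebooks, which is absorbed into $K$, makes $\hat X_1^n\sim P_X^n$ exactly, because each first-layer codeword is marginally i.i.d. $P_X$ under a uniform index; we only need approximate closeness for $\hat X_2^n$.

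The remaining step, which is the main obstacle, is making both outputs \emph{exactly} $P_X^n$. Here we apply a maximal coupling at each decoder. Decoder $i$, which knows the code distribution, passes its output $\tilde X_i^n$ through a channel coupling $P_{\tilde X_i^n}$ to $P_X^n$ with $\Pr[\hat X_i^n\ne\tilde X_i^n]=\|P_{\tilde X_i^n}-P_X^n\|_{TV}\to0$. This channel uses only local randomness, which we derive from $K$. Because $\Delta_i$ is bounded, the extra distortion is at most $\max\Delta_i\cdot\|\cdot\|_{TV}\to0$. Meanwhile the distortion of $\tilde X_i^n$ approaches $\bbE[\Delta_i(X,\hat X_i)]\le D_i$ by the total variation closeness. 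The care needed is that decoder 2's correction must not depend on $S_2$ in a way that breaks the structure, but it acts only on its own output, so it does not. Finally, letting the rates approach the mutual informations and taking the closure finishes the proof.
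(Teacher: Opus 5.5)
Your proof is correct in substance, and its core is the paper's mechanism. Specializing Lemma~\ref{Theorem:2} to $U_i=\hat X_i$ gives essentially your superposition likelihood encoder with codewords output directly, followed by maximal coupling at both decoders. Your converse is the paper's single-letterization, except that you apply data processing, $I(X^n;S_1,K)\ge I(X^n;\hat X_1^n)$, at block level rather than after introducing $U_1=(S_1,K,J)$, $U_2=S_2$.

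The real difference is that you keep the random codebook as part of $K$ instead of derandomizing. Then $\bbE_{\bZ_n}[Q^{\bZ_n}_{X^n\hat X_1^n\hat X_2^n}]=P^n_{X\hat X_1\hat X_2}$ holds exactly. So the distortion bound and the TV closeness of both outputs follow from $\bbE_{\bZ_n}[T(\bZ_n)]\to0$ in \eqref{eq:q and p approx} alone. This lets you drop three things:
\begin{itemize}
\item the Markov/Chebyshev codebook selection;
\item the variance computation;
\item the output soft-covering conditions on $R_1+R_{\rm c}$ and $R_2+R_{\rm c}$.
\end{itemize}
In fact your extra indices $R_1',R_2'$ are superfluous. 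The price is that the shared randomness is the entire codebook, which is legitimate only because $K$ is unlimited. The paper's heavier route yields a deterministic codebook and the finite-rate common-randomness tradeoff of Lemma~\ref{Theorem:2}. Your explicit compactness and convexity arguments also fill in the limit and convex-closure steps that the paper treats only briefly.

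Two side claims are wrong as stated, though neither carries weight in your argument.

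First, averaging over codebooks does not make $\hat X_1^n\sim P_X^n$ exactly. Under the actual likelihood encoder, the selected index depends on the codebook through $X^n$. Exactness holds only under the idealized $Q$, where the index is uniform and independent of the codebook. Already for $n=1$, $P_X=\mathrm{Bern}(1/4)$, $\hat X_1=X$, and two codewords, the output is $\mathrm{Bern}(5/32)$. You only get TV closeness, so the coupling at decoder~1 is genuinely needed; you do apply it, so nothing breaks.

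Second, $P\approx Q$ requires the source to be soft-covered for each fixed value of the common randomness. That means $R_1>I(X;\hat X_1)$ and $R_2>I(X;\hat X_1,\hat X_2)$ on the message rates alone. Your stated conditions $R_1+R_1'>I(X;\hat X_1)$ and $R_2+R_1'+R_2'>I(X;\hat X_1\hat X_2)$ are not sufficient. They hold even with zero message rate, where $\hat X_1^n$ is independent of $X^n$. So unlimited common randomness is not what makes this step work. Since you fixed the message rates above the mutual informations at the outset, the argument still goes through.
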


The proof of Theorem~\ref{Theorem:1} is provided in Section \ref{sec:proof of thm1}. Theorem~\ref{Theorem:1} follows by specializing a more general result in Lemma~\ref{Theorem:2} for SR with arbitrary prescribed product output distributions and limited common randomness (cf. Section \ref{app:general-result}). In the achievability part, we introduce intermediate reconstructions whose induced distributions approximate the prescribed product distributions under the total variation (TV) distance in Lemma \ref{lem:good code}, and further apply maximal-coupling corrections to obtain the final reconstructions whose distributions match the prescribed distributions exactly. Specifically, to prove Lemma \ref{lem:good code}, we construct an auxiliary distribution that approximates the actual distribution under TV distance and satisfies the distortion and perception constraints, and show that there exists a deterministic codebook realization for which the distortion and perception bounds hold simultaneously.
The converse follows from the standard converse proof for SR~\cite{Rimoldi_1994_Successive-refinement-of-information-characterization-of-the-achievable-rates}, with modifications required to account for the output distribution constraints.


We make the following remarks. Firstly, Theorem~\ref{Theorem:1} shows that, under unlimited common randomness, strong-sense perfect perception preserves the classical two-layer rate structure of SR. The base-layer rate governs the coarse reconstruction, while the sum rate across the two layers governs the refined reconstruction. Thus, the perception constraint does not introduce new forms of compression rate constraints; rather, it restricts the admissible reconstruction distributions by requiring both reconstruction sequences to have the same block distribution as the source sequence.

Secondly, Theorem~\ref{Theorem:1} has the same rate region structure as the classical lossy SR result without the perception constraint~\cite{Rimoldi_1994_Successive-refinement-of-information-characterization-of-the-achievable-rates}. As a sanity check, when the perception constraints are removed, our region reduces to the classical SR region \cite[Theorem~1]{Rimoldi_1994_Successive-refinement-of-information-characterization-of-the-achievable-rates}. Consequently, the RDP region under strong-sense perfect perception is in general contained in the RD region. The two regions coincide if a classical optimal reconstruction distribution also satisfies the strong-sense perfect perception requirements; otherwise, enforcing strong-sense perfect perception would require higher compression rates even with unlimited common randomness.

Thirdly, we compare Theorem~\ref{Theorem:1} with the corresponding result under the weak-sense perfect perception constraint~\cite[Theorem 7]{Zhang_2025_Universal-Rate-Distortion-Perception-Representations-for-Lossy-Compression}. The weak-sense formulation constrains only the marginal distribution of each reconstruction symbol such that for each $t\in[n]$, the reconstructed source symbol $\hatX_t$ has the same distribution as the source symbol $X_t$, which is $P_X$. In contrast, our strong-sense perfect perception constraint requires the reconstructed sequence $\hatX^n$ to have the same distribution as the source sequence $X^n$, which is the product distribution $P_X^n$. It follows from our result that, with unlimited common randomness, the two formulations yield the same first-order rate region. Furthermore, we would like to emphasize that our proof differs significantly from~\cite[Theorem 7]{Zhang_2025_Universal-Rate-Distortion-Perception-Representations-for-Lossy-Compression}. Specifically, the authors of~\cite[Theorem 7]{Zhang_2025_Universal-Rate-Distortion-Perception-Representations-for-Lossy-Compression}
applied the strong functional representation lemma~\cite{Li_2018_Strong_Functional} and generated the reconstruction symbols separately, while our proof uses superposition soft covering~\cite[Lemma 4]{Goldfeld_2020_Wiretap_Channels} followed by maximal coupling~\cite[Chapter~III]{Lindvall_2002_Lectures_Coupling} to enforce the exact product distributions at both reconstruction layers without changing the first-order rates.

\section{Numerical Example}
\label{sec:Bernoulli--Hamming Example}

We now specialize Theorem~\ref{Theorem:1} to a Bernoulli source under Hamming distortion. Let $\mathcal X=\{0,1\}$ be the source and reconstruction alphabet, and let
$P_X=\mathrm{Bern}(\rho)$ be the Bernoulli distribution with parameter $\rho\in(0,0.5)$, i.e., $\Pr\{X=1\}=\rho$ and $\Pr\{X=0\}=1-\rho$.
Let $H_{\rmb}(\cdot)$ be the binary entropy function, and $\Delta_i(a,b):=\mathbbm{1}\{a\ne b\}$ be the Hamming distortion measure for any $i\in[2]$ and $(a,b)\in\calX^2$.
Fix any $(D_1,D_2)\in\bbR_+^2$.
It follows from Definition~\ref{Definition:usp} that, for any $i\in[2]$, the distortion and perception constraints \eqref{eq:def distortion constraint} and \eqref{eq:def perception constraint} specialize to
\begin{align}
\limsup_{n\to\infty}\frac{1}{n}\sum_{t\in[n]}
\Pr\{X_t\neq\hatX_{i,t}\}&\leq D_i,\quad
\hatX_i^n\sim \mathrm{Bern}(\rho)^n,\label{eq:def two hamming}
\end{align}
respectively.
With a slight abuse of notation, we reuse $\calR_\infty(D_1,D_2|P_X)$ to represent the convex closure of all rate pairs achievable under the rate constraints in Definition~\ref{Definition:usp} and the specialized constraints in \eqref{eq:def two hamming}. 

Define $D_{\max}:=2\rho(1-\rho)$. 
For any $D\in[0,D_{\max})$, define
\begin{align}
\varphi_\rho(D):=H_\rmb(\rho)-(1-\rho)H_\rmb\bigg(\frac{D}{2(1-\rho)}\bigg)-\rho H_\rmb\bigg(\frac{D}{2\rho}\bigg),\label{eq:def_phi_ro}
\end{align}
and $\varphi_\rho(D)=0$ when $D\in[D_{\max},\infty)$.
By specializing Theorem \ref{Theorem:1} to the Bernoulli source, we obtain the following result, which shows that under strong-sense perfect perception, the Bernoulli source is successively refinable under Hamming distortion.
\begin{figure}[!t]
\centering
\includegraphics[width=0.65\linewidth]{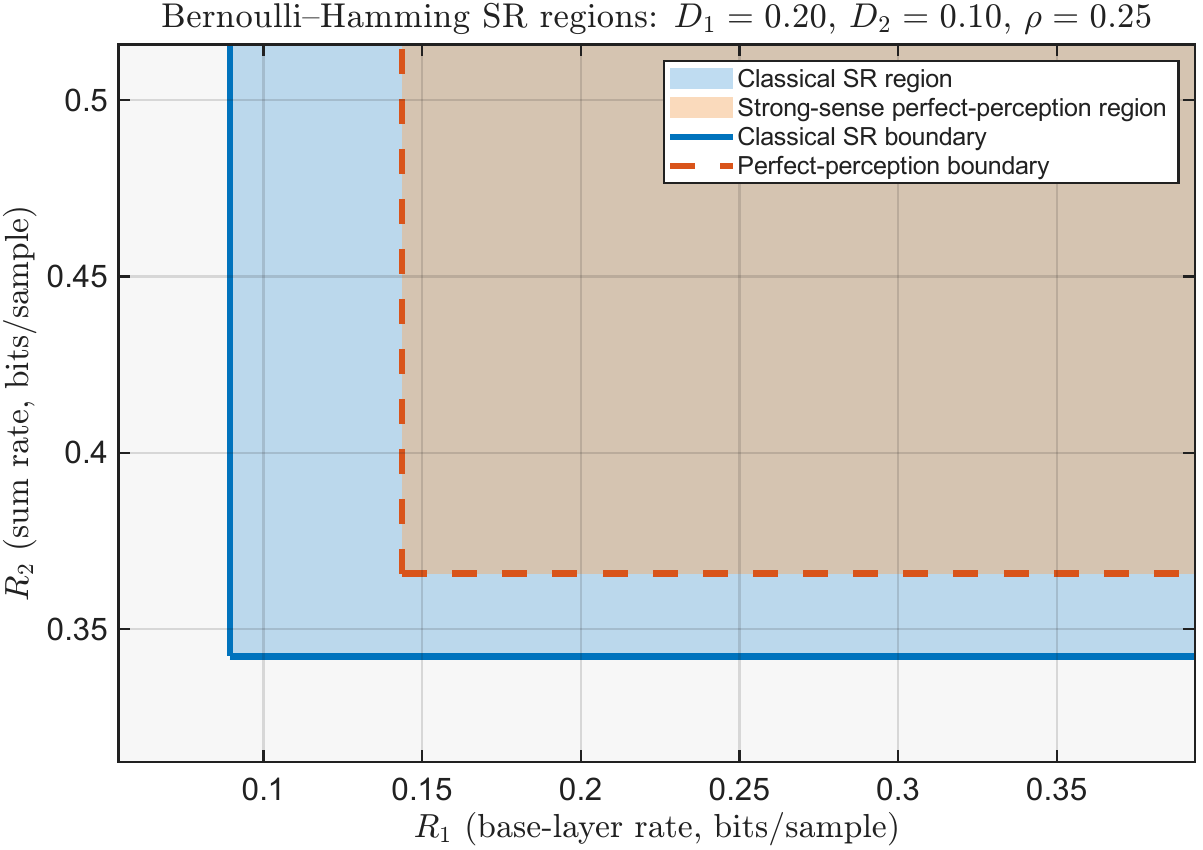}
\caption{Classical and strong-sense perfect perception SR for a Bernoulli source with parameter $\rho=0.25$ under
Hamming distortion with distortion levels $D_1=0.20$ and $D_2=0.10$.}
\label{fig:strong-sense-perfect-perception}
\end{figure}

\begin{corollary}\label{Corollary}
    For every $0\le D_2\le D_1\le D_{\max}$, the RDP region for SR of the Bernoulli source under Hamming distortion and strong-sense perfect perception is
    \begin{align}
    \begin{aligned}
    \mathcal R_\infty(D_1,D_2| P_X)=
    \big\{(R_1,R_2)\in\mathbb R_+^2:~
    R_1\ge \varphi_\rho(D_1),
    R_2\ge \varphi_\rho(D_2)\big\},
    \end{aligned}
    \label{eq:binary-sr-region}
    \end{align}
where $P_X=\mathrm{Bern}(\rho)$.
Consequently, the P2P RDP optima at the two distortion levels
can be attained simultaneously, and hence the Bernoulli source is
$(D_1,D_2)$-successively refinable under Hamming distortion and strong-sense perfect perception.
\end{corollary}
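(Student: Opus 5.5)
Starting from Theorem~\ref{Theorem:1}, the proof splits into a converse (outer bound) and an achievability part (construct a good joint distribution in $\mathcal{M}(D_1,D_2)$). The first step is to identify $\varphi_\rho(D)$ with the P2P function $R_i(D|P_X)$ of \eqref{RDP Function} under Hamming distortion. For the lower bound, take any $P_{\hat X|X}$ with $P_{\hat X}=\mathrm{Bern}(\rho)$ and $\Pr\{X\ne\hat X\}\le D$. The marginal constraint forces $\Pr\{X=0,\hat X=1\}=\Pr\{X=1,\hat X=0\}=:a$ with $2a\le D$, and then
\begin{align}
I(X;\hat X)=H_\rmb(\rho)-(1-\rho)H_\rmb\Big(\frac{a}{1-\rho}\Big)-\rho H_\rmb\Big(\frac{a}{\rho}\Big).
\end{align}
Differentiating in $a$ shows this expression is decreasing on $a\in[0,\rho(1-\rho)]$, where independence is attained at $a=\rho(1-\rho)$. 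Hence the infimum is attained at $a=\min\{D/2,\rho(1-\rho)\}$ and equals $\varphi_\rho(D)$.

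\textbf{Converse.} For any $P_{X\hat X_1\hat X_2}\in\mathcal{M}(D_1,D_2)$, the pair $(X,\hat X_1)$ is feasible for $R_1(D_1|P_X)$, so $I(X;\hat X_1)\ge\varphi_\rho(D_1)$. Likewise $(X,\hat X_2)$ is feasible for $R_2(D_2|P_X)$, so
\begin{align}
I(X;\hat X_1,\hat X_2)\ge I(X;\hat X_2)\ge\varphi_\rho(D_2).
\end{align}
Thus every region $\calR(P_{X\hat X_1\hat X_2})$ lies in the right-hand side of \eqref{eq:binary-sr-region}. That set is closed and convex, so it also contains the convex closure of the union.

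\textbf{Achievability.} It suffices to exhibit one $P_{X\hat X_1\hat X_2}\in\mathcal{M}(D_1,D_2)$ with $I(X;\hat X_1)=\varphi_\rho(D_1)$ and $I(X;\hat X_1,\hat X_2)=\varphi_\rho(D_2)$; the latter holds if $X-\hat X_2-\hat X_1$ is a Markov chain and $(X,\hat X_2)$ is optimal. The natural construction is a cascade of "symmetric-flux" binary channels that keep the marginal $\mathrm{Bern}(\rho)$ at every stage:
\begin{itemize}
\item let $P_{X\hat X_2}$ be the optimal joint law with cross mass $a_2=D_2/2$;
\item generate $\hat X_1$ from $\hat X_2$ through a channel with cross mass $b$, preserving the $\mathrm{Bern}(\rho)$ marginal.
\end{itemize}
A composition of such $\rho$-preserving channels is again of the same form. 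Writing each as a reversible kernel $I-\lambda(\mathbf{1}P_X^{\top}-\ldots)$, i.e. $W=(1-\lambda)I+\lambda\mathbf{1}\pi^{\top}$ with $\pi=P_X$, the cross mass is $\lambda\rho(1-\rho)$ and the parameters compose as $(1-\lambda)=(1-\lambda_1)(1-\lambda_2)$. Choosing $b$ so that the cross mass of $X\to\hat X_1$ equals $\min\{D_1/2,\rho(1-\rho)\}$ is therefore possible, since $D_2\le D_1$ requires only a nonnegative $\lambda$. The Markov chain then gives $I(X;\hat X_1,\hat X_2)=I(X;\hat X_2)=\varphi_\rho(D_2)$, and $I(X;\hat X_1)=\varphi_\rho(D_1)$ follows from the first step.

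The main obstacle is this composition step: verifying that the end-to-end channel from $X$ to $\hat X_1$ has exactly the prescribed distortion while both marginals stay $\mathrm{Bern}(\rho)$. The $\lambda$-parametrization above handles it, and the boundary case $D_1\ge D_{\max}$ corresponds to $\lambda=1$ (independence). Successive refinability follows immediately, because the corner point $(\varphi_\rho(D_1),\varphi_\rho(D_2))=(R_1(D_1|P_X),R_2(D_2|P_X))$ lies in the region.
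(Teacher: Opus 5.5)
Your proposal is correct and follows essentially the same route as the paper. The converse uses $I(X;\hat X_1)\ge\varphi_\rho(D_1)$ and $I(X;\hat X_1,\hat X_2)\ge I(X;\hat X_2)\ge\varphi_\rho(D_2)$, and achievability uses the Markov cascade $X\to\hat X_2\to\hat X_1$ of $\mathrm{Bern}(\rho)$-preserving channels. Your kernel with $\lambda=d/D_{\max}$ is exactly the paper's $\mathbf{T}(d)$, and your composition rule $1-\lambda=(1-\lambda_1)(1-\lambda_2)$ reproduces the paper's identity $\mathbf{T}(D_2)\mathbf{T}(\delta)=\mathbf{T}(D_1)$ with $\delta=(D_1-D_2)/(1-D_2/D_{\max})$. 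The one difference is that you verify $\varphi_\rho(D)=R_i(D|P_X)$ directly by optimizing over the cross mass $a$, whereas the paper cites Chen et al.\ for this identity.
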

The proof of Corollary~\ref{Corollary} is available in Appendix~\ref{sec:Corollary1}.
To illustrate Corollary~\ref{Corollary}, consider $\rho=0.25$,
$D_1=0.20$, and $D_2=0.10$.
In Fig.~\ref{fig:strong-sense-perfect-perception}, we plot the rate region for SR with perfect perception in Corollary \ref{Corollary} versus the corresponding result without the perception constraint~\cite[Section~V-B]{EquitzC_1991_Successive-refinement-of-information}. Specifically, the solid and dashed lines depict the classical no-perception and strong-sense perfect-perception boundaries, respectively. As observed, the latter region is strictly contained in the former, showing that exact distribution matching incurs a rate penalty even with unlimited common randomness. Nevertheless, the perfect perception still preserves the property of successive refinability.

\section{Proof of Theorem~\ref{Theorem:1}}
\label{sec:proof of thm1}

This section presents the proof of Theorem~\ref{Theorem:1}. Specifically, Section \ref{app:general-result} analyzes a  general setting for SR with arbitrary prescribed product output distributions and limited common randomness, and specializes the general rate region to prove Theorem~\ref{Theorem:1}; Section \ref{app:achievability} analyzes the achievability part of the general results; and Section~\ref{app:converse} analyzes the converse part of the general results.

\subsection{General Results and Specialization}
\label{app:general-result}

As shown in Fig.~\ref{fig:successive-refinement2}, we consider the SR with limited common randomness, where the two reconstruction sequences are required to have arbitrary prescribed product distributions that are not necessarily the product source distribution. Fix any integer $n\in\bbN$. Let $X^n \sim P_X^n$ be a memoryless source over the finite alphabet $\calX$, and fix two prescribed reconstruction distributions $(\psi_1,\psi_2) \in \calP(\calX)^2$, three positive integers $(M_1,M_2,M_{\rm c}) \in \mathbb{N}^3$, and two distortion levels $(D_1,D_2) \in \mathbb{R}_+^2$. Let $K$ be uniformly distributed over $[M_{\rm c}]$, which is independent of $X^n$, and shared by the encoder $f$ and both decoders $(\phi_1,\phi_2)$. The encoder $f$ uses $(X^n,K)$ to generate a base-layer message $S_1 \in [M_1]$ and a refinement-layer message $S_2 \in [M_2]$. Subsequently, using $S_1$ and $K$, the decoder $\phi_1$ generates a source estimate $\hat{X}_1^n$  while the decoder $\phi_2$ generates a refined estimate $\hat{X}_2^n$ using $(S_1,S_2,K)$. For each $i \in [2]$, the reconstruction sequence $\hat{X}_i^n$ is required to be within distortion level $D_i$ from the source sequence $X^n$ and the output distribution constraint requires that the distribution of $\hatX_i^n$ is exactly $\psi_i^n$.

\begin{figure}[t]
\centering
\def\EncoderWidth{1.95cm}
\def\EncoderHeight{1.5cm}

\begin{tikzpicture}[
x=1cm,
y=1cm,
font=\large,
>={Latex[length=2.2mm,width=1.5mm]},
wire/.style={
line width=0.8pt
},
flow/.style={
->,
line width=0.8pt
},
crflow/.style={
->,
line width=0.8pt,
dashed
},
block/.style={
draw,
line width=0.8pt,
minimum width=1.90cm,
minimum height=0.80cm,
inner sep=0pt,
font=\Large
},
encoder/.style={
draw,
line width=0.8pt,
minimum width=\EncoderWidth,
minimum height=\EncoderHeight,
inner sep=0pt,
font=\Large
},
outerbox/.style={
draw,
line width=0.2pt,
inner sep=6.0pt
},
signal/.style={
fill=white,
inner sep=1.5pt,
font=\large
}
]

\node[encoder] (enc) at (3.35,0.10) {$f$};

\node[block] (phi1) at (8.15, 0.75) {$\phi_1$};
\node[block] (phi2) at (8.15,-0.55) {$\phi_2$};

\node[outerbox, fit=(phi1)(phi2)] (decbox) {};

\node[
block,
minimum width=2.40cm,
minimum height=0.70cm
] (rc) at (5.75,2.25)
{$K\in [M_{\rm{c}}]$};

\draw[crflow] (rc.south) -- (enc.north);
\draw[crflow] (rc.south) -- (decbox.north);

\coordinate (input-left) at (0.00,0.10);

\draw[flow]
(input-left)
--
node[above=2pt,pos=0.32] {$X^n\sim P_X^n$}
(enc.west);

%
\coordinate (encout1) at (enc.east |- phi1.west);
\coordinate (encout2) at (enc.east |- phi2.west);

\draw[flow]
(encout1)
--
node[signal,above=2pt,pos=0.48] {$S_1$}
(phi1.west);

\draw[flow]
(encout2)
--
node[signal,above=2pt,pos=0.46] {$S_2$}
(phi2.west);

\coordinate (m1branch)
at ($(phi1.west)+(-0.58,0)$);

\coordinate (phi2upperinput)
at ($(phi2.west)+(0,0.22)$);

\draw[flow]
(m1branch)
|-
(phi2upperinput);

\draw[flow]
(phi1.east)
--
++(3.00,0)
node[above=3pt,pos=0.56]
{$(\hat X_1^n\sim \psi_1^n,\,D_1)$};

\draw[flow]
(phi2.east)
--
++(3.00,0)
node[above=3pt,pos=0.56]
{$(\hat X_2^n\sim \psi_2^n,\,D_2)$};

\end{tikzpicture}

\caption{The SR randomized source coding system with prescribed product output distributions and limited common randomness.}
\label{fig:successive-refinement2}
\end{figure}

A randomized SR code is defined as follows.
\begin{definition}
An $(n,M_1,M_2,M_{\rm c})$-code consists of one encoder
\begin{align}
f:\calX^n\times[M_{\rm c}]
\to\calP([M_1]\times[M_2]),
\end{align}
and two decoders
\begin{align}
\phi_1:[M_1]\times[M_{\rm c}]
&\to\calP(\calX^n),\\*
\phi_2:[M_1]\times[M_2]\times[M_{\rm c}]
&\to\calP(\calX^n).
\end{align}
\end{definition}

To characterize the amount of common randomness, let $R_{\rm c}\in\bbR_+$ denote the common randomness rate. The corresponding rate region is defined as follows.
\begin{definition}\label{def:exact-stochastic-region}
A rate tuple $(R_1,R_2,R_{\rm c})\in\bbR_+^3$ is said to be achievable if there exists a sequence of $(n,M_1,M_2,M_{\rm c})$-codes such that
\begin{align}
\limsup_{n\to\infty}\frac{1}{n}\log M_1
&\leq R_1,\label{eq:common-R1}\\
\limsup_{n\to\infty}\frac{1}{n}\log(M_1M_2)
&\leq R_2,\label{eq:common-R2}\\
\limsup_{n\to\infty}\frac{1}{n}\log M_{\rm c}
&\leq R_{\rm c},\label{eq:common-Rc}
\end{align}
and, for each $i\in[2]$,
\begin{align}
\limsup_{n\to\infty}
\bbE\big[\Delta_i^{(n)}(X^n,\hat X_i^n)\big]
&\leq D_i,\label{eq:common-distortion}\\
\hat X_i^n&\sim\psi_i^n.
\label{eq:exact-output}
\end{align}
The convex closure of the set of all achievable rate tuples is called the optimal achievable rate region and is denoted by $\calR_{\rm SR}(D_1,D_2)$.
\end{definition}
Compared with the SR problem under strong-sense perfect perception considered in Section~\ref{sec:Problem Formulation}, the present
formulation is more general in two aspects. Firstly, the prescribed reconstruction distributions $\psi_1$ and $\psi_2$ can be arbitrary instead of being the source distribution $P_X$. Secondly, the amount of common randomness is explicitly constrained through $M_{\rm c}$, rather than being unlimited.
Fix any $(D_1,D_2)\in\bbR_+^2$ such that $D_1>D_2$. Let $U_1$ and $U_2$ be auxiliary random variables taking values in finite alphabets $\calU_1$ and $\calU_2$, respectively, such that $\hat X_1-U_1-(X,U_2)$ and $\hat X_2-(U_1,U_2)-(X,\hat X_1)$ form Markov chains. Define the following set of distributions:
\begin{align}\label{general-set}
\calM_{\rm SR}(D_1,D_2):=
\left\{
\begin{aligned}
P_{XU_1U_2\hat X_1\hat X_2}
&\in\calP(\calX^3\times\calU_1\times\calU_2):~
P_{\hat X_1}=\psi_1,~P_{\hat X_2}=\psi_2,~\bbE[\Delta_1(X,\hat X_1)]\le D_1,\\
&\qquad\qquad\bbE[\Delta_2(X,\hat X_2)]\le D_2,~
|\calU_1|\le 3|\calX|+4,~
|\calU_2|\le 2|\calX|+1
\end{aligned}
\right\}.
\end{align}
Given any distribution $P_{XU_1U_2\hat X_1\hat X_2}\in\calM_{\rm SR}(D_1,D_2)$, define the following set of rate tuples:
\begin{align}\label{general-region}
\mathcal{L}_{\rm SR}(P_{XU_1U_2\hat X_1\hat X_2})
:=
\left\{
\begin{aligned}
(R_1,R_2,R_{\rm c})\in\bbR_+^3 &:~ R_1 \ge I(X;U_1),~R_2 \ge I(X;U_1,U_2),~R_1+R_{\rm c} \ge I(\hat X_1;U_1),\\
&\qquad R_1+R_{\rm c} \ge I(\hat X_2;U_1),~R_2+R_{\rm c} \ge I(\hat X_2;U_1,U_2)
\end{aligned}
\right\}.
\end{align}

\begin{lemma}\label{Theorem:2}
The rate-distortion region for SR under prescribed product output distributions and limited common randomness in Definition~\ref{def:exact-stochastic-region} satisfies
\begin{align}
\calR_{\rm SR}(D_1,D_2)
=
\bigcup_{P_{XU_1U_2\hat X_1\hat X_2}\in\calM_{\rm SR}(D_1,D_2)}\calL_{\rm SR}(P_{XU_1U_2\hat X_1\hat X_2}).
\end{align}
\end{lemma}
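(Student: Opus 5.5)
We prove the two inclusions of Lemma~\ref{Theorem:2} separately: achievability by layered soft covering followed by a coupling correction, and the converse by single-letterization with a time-sharing variable.

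\emph{Achievability.} Fix $P_{XU_1U_2\hat X_1\hat X_2}\in\calM_{\rm SR}(D_1,D_2)$ and a rate tuple in the interior of $\calL_{\rm SR}$. We use a superposition codebook.
\begin{itemize}
\item For each common-randomness index $k\in[M_{\rm c}]$ and base index $s_1\in[M_1]$, draw $u_1^n(k,s_1)$ i.i.d.\ from $P_{U_1}^n$.
\item For each $s_2\in[M_2]$, draw $u_2^n(k,s_1,s_2)$ from $P_{U_2|U_1}^n(\cdot|u_1^n(k,s_1))$.
\end{itemize}
The encoder is a likelihood encoder: given $(x^n,k)$, it selects $(s_1,s_2)$ with probability proportional to $P_{X|U_1U_2}^n(x^n|u_1^n,u_2^n)$. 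Decoder $1$ passes $u_1^n(k,s_1)$ through $P^n_{\hat X_1|U_1}$, and decoder $2$ passes $(u_1^n,u_2^n)$ through $P^n_{\hat X_2|U_1U_2}$. These kernels are well defined by the Markov chains in the definition of $\calM_{\rm SR}$.

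We then define the idealized distribution $Q$ under which $(K,S_1,S_2)$ is uniform, the codewords are selected, and $X^n$ and $\hat X^n_i$ are generated through the memoryless channels. The superposition soft covering lemma~\cite[Lemma 4]{Goldfeld_2020_Wiretap_Channels} gives the following, in expectation over the codebook:
\begin{itemize}
\item the $X^n$-marginal of $Q$ is close to $P_X^n$ in TV distance when $R_1>I(X;U_1)$ and $R_2>I(X;U_1,U_2)$;
\item the $\hat X_1^n$-marginal is close to $\psi_1^n$ when $R_1+R_{\rm c}>I(\hat X_1;U_1)$;
\item the $\hat X_2^n$-marginal is close to $\psi_2^n$ when $R_1+R_{\rm c}>I(\hat X_2;U_1)$ and $R_2+R_{\rm c}>I(\hat X_2;U_1,U_2)$, which are the two superposition layers for the second output.
\end{itemize}
Because the true system and $Q$ share the same conditional of everything given $(X^n,K)$, the first bound transfers to a TV bound between the true joint distribution and $Q$. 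Under $Q$, the expected distortion is exactly $\bbE[\Delta_i(X,\hat X_i)]\le D_i$, and since the distortion measures are bounded, the true distortion is at most $D_i+o(1)$.

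Markov's inequality and a union bound over the finitely many TV and distortion criteria yield one deterministic codebook that satisfies all of them simultaneously; this is Lemma~\ref{lem:good code}. With this codebook, the intermediate outputs $\tilde X_i^n$ satisfy $\|P_{\tilde X_i^n}-\psi_i^n\|_{\rm TV}\to0$.

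We then apply a maximal coupling, implemented as a stochastic map at each decoder, from $P_{\tilde X_i^n}$ to $\psi_i^n$. The resulting $\hat X_i^n$ has law exactly $\psi_i^n$ and differs from $\tilde X_i^n$ with probability equal to the TV distance. Hence the distortion increases by at most $\max\Delta_i$ times a vanishing quantity, at no cost in rate. The decoders may use private randomness for this step, since the code definition allows stochastic decoders. Taking convex closure completes the achievability part.

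\emph{Converse.} Take an achievable code and let $T$ be uniform on $[n]$, independent of everything else. Set
\[
U_{1,t}=(S_1,K,X^{t-1}),\qquad U_{2,t}=S_2,
\]
possibly also including $\hat X_1^{t-1}$ or past outputs if needed. The rate bounds follow from
\[
\log M_1\ge I(X^n;S_1K)=\sum_t I(X_t;S_1,K,X^{t-1}),
\]
and the analogous expansion for $\log(M_1M_2)$, using the independence of $K$ and $X^n$ and the i.i.d.\ source.

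For the output-side bounds we use $\log M_1+\log M_{\rm c}\ge I(\hat X_i^n;S_1,K)$. Since $\hat X_i^n\sim\psi_i^n$ is i.i.d., this quantity is at least $\sum_t I(\hat X_{i,t};S_1,K,\hat X_i^{t-1})$. The main obstacle is to choose a single auxiliary $U_{1,t}$ that works simultaneously for the source chain (which needs $X^{t-1}$) and the output chains (which need $\hat X_i^{t-1}$), while preserving the Markov chains $\hat X_1-U_1-(X,U_2)$ and $\hat X_2-(U_1,U_2)-(X,\hat X_1)$. My first attempt is to include the common randomness and the entire message pair, e.g.\ $U_{1,t}=(S_1,K,T)$ and $U_{2,t}=S_2$. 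The Markov chains then hold exactly because the decoders use only $(S_1,K)$, respectively $(S_1,S_2,K)$, plus private randomness, and mutual-information chains lower bound $I(\cdot;U)$ via the i.i.d.\ structure of both $X^n$ and $\hat X_i^n$. The needed inequality is
\[
\sum_t I(X_t;S_1,K)\le I(X^n;S_1,K),
\]
which holds because $X^n$ is i.i.d.: indeed $\sum_t I(X_t;S_1,K)\le\sum_t I(X_t;S_1,K,X^{t-1})=I(X^n;S_1,K)$. The same argument applies to $\hat X_i^n$. Setting $X=X_T$ and $\hat X_i=\hat X_{i,T}$, the time-sharing variable $T$ is absorbed into $U_1$. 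This gives marginals $\psi_i$, distortions $\le D_i+o(1)$, and the Markov structure.

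Finally, the support lemma bounds $|\calU_1|$ and $|\calU_2|$ by preserving $P_X$, the two output marginals, the two distortions, and the five information quantities; this accounts for the stated cardinalities. A continuity argument in $n$ then closes the region.
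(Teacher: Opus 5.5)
\textbf{Overall.} Your plan follows the paper's proof on both sides. The achievability uses the same ingredients: a superposition codebook indexed by $(k,s_1)$ and $(k,s_1,s_2)$, a likelihood encoder, the idealized distribution $Q$, soft covering and superposition soft covering, and a maximal-coupling correction. In the converse, your final choice $U_1=(S_1,K,J)$, $U_2=S_2$ is the paper's, and you drop the past using the i.i.d.\ structure of $X^n$ and of $\hat X_i^n\sim\psi_i^n$. The initial detour through $X^{t-1}$ is unnecessary.

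\textbf{The gap: selecting a deterministic codebook.} Markov's inequality plus a union bound works for the TV criteria, because their ensemble expectations vanish. It does not work for the two distortion criteria. For a fixed codebook,
$D_i^Q(\bz_n)=\frac{1}{M_1M_2M_{\rm c}}\sum_{s_1,s_2,k}\frac1n\sum_t g_i(u_{1,t},u_{2,t})$
depends on the codebook, and only its ensemble average equals $\bbE[\Delta_i(X,\hat X_i)]\le D_i$. So under $Q$ the distortion is not exactly $\bbE[\Delta_i(X,\hat X_i)]$ for every codebook. Markov then gives only $\Pr\{D_i^Q(\bZ_n)>D_i+\delta\}\le D_i/(D_i+\delta)$, which does not vanish. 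For two layers the two bounds can sum to more than one; for example, $D_1=0.2$, $D_2=0.1$, $\delta=0.01$ gives roughly $0.95+0.91$. Hence you cannot conclude that one codebook meets both distortion constraints and the TV constraints simultaneously. With a single distortion constraint, a slowly vanishing $\delta$ would make the Markov argument work, but not with two.

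\textbf{The fix.} You need concentration of $D_i^Q(\bZ_n)$ around its mean, which the paper supplies via Chebyshev:
\begin{itemize}
\item Terms with different $(s_1,k)$ are independent.
\item Terms sharing $(s_1,k)$ have covariance $\frac1n\mathrm{Var}(\bbE[g_i(U_1,U_2)|U_1])$.
\item Popoviciu's inequality then gives $\mathrm{Var}(D_i^Q(\bZ_n))\le \Delta_{\max}^2/(4nM_1M_{\rm c})\to0$.
\end{itemize}
With this, the union bound has vanishing total failure probability and a good deterministic codebook exists.

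\textbf{A smaller point.} The soft-covering statement needed to transfer from $Q$ to the true distribution is $\|Q_{X^nK}-P_X^nP_K\|_{\rm TV}=\frac{1}{M_{\rm c}}\sum_k\|Q_{X^n|K=k}-P_X^n\|_{\rm TV}\to0$. That is, covering must hold for each fixed $k$ without using $K$, not merely closeness of the $X^n$-marginal. Your rate conditions $R_1>I(X;U_1)$ and $R_2>I(X;U_1,U_2)$ match this per-$k$ version, but the statement should say so explicitly.
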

The achievability and converse proofs of Lemma~\ref{Theorem:2} are provided in Sections \ref{app:achievability} and \ref{app:converse}, respectively.

In the following, we specialize the general rate region in Lemma~\ref{Theorem:2} to obtain the rate region stated in Theorem~\ref{Theorem:1}.
Setting $\psi_1=\psi_2=P_X$ and removing the constraints of the common randomness rate $R_\rmc$ from Lemma~\ref{Theorem:2} yield
\begin{align}
\calR_{\infty}(D_1,D_2|P_X)
=\bigcup_{P_{XU_1U_2\hat X_1\hat X_2}\in\calM_{\rm SR}(D_1,D_2)}
\Big\{(R_1,R_2)\in\bbR_+^2:R_1\ge I(X;U_1),~R_2\ge I(X;U_1,U_2)\Big\},
\end{align}
where $\calM_{\rm SR}(D_1,D_2)$ is defined as in \eqref{general-set} with $\psi_1=\psi_2=P_X$.
It remains to show that the auxiliary random variables $U_1$ and $U_2$ can be eliminated without changing the resulting rate region.
Recall that $\calM(D_1,D_2)$ consists of all joint distributions $P_{X\hat X_1\hat X_2}$ satisfying the prescribed marginal and distortion constraints in \eqref{def:M(D1,D2)}.
Fix any $P_{XU_1U_2\hat X_1\hat X_2}\in\calM_{\rm SR}(D_1,D_2)$.
Marginalizing over $(U_1,U_2)$ yields $P_{X\hat X_1\hat X_2}\in\calM(D_1,D_2)$.
Moreover, it follows from the Markov chains involving $U_1$ and $U_2$ defined before \eqref{general-set} that $X-U_1-\hatX_1$ and $X-(U_1,U_2)-(\hatX_1,\hatX_2)$ also form Markov chains, leading to
\begin{align}
I(X;U_1)&\ge I(X;\hat X_1),\\
I(X;U_1,U_2)&\ge I(X;\hatX_1,\hatX_2),
\end{align}
via the data-processing inequality.
Consequently, we obtain
\begin{align}
\label{eq:ach th1}
\calR_{\infty}(D_1,D_2|P_X)\subseteq
\bigcup_{P_{X\hat X_1\hat X_2}\in\calM(D_1,D_2)}
\calR(P_{X\hat X_1\hat X_2}).
\end{align}
Conversely, fix any $P_{X\hat X_1\hat X_2}\in\calM(D_1,D_2)$, $(R_1,R_2)\in\calR(P_{X\hat X_1\hat X_2})$, and set $U_1=\hat X_1$, $U_2=\hat X_2$. It follows that $P_{XU_1U_2\hat X_1\hat X_2}\in\calM_{\rm SR}(D_1,D_2)$, and
\begin{align}
I(X;U_1)&=I(X;\hat X_1),\\
I(X;U_1,U_2)&=I(X;\hatX_1,\hatX_2).
\end{align}
Consequently, we obtain $(R_1,R_2)\in\calR_{\infty}(D_1,D_2|P_X)$, which implies
\begin{align}
\label{eq:con th1}
\bigcup_{P_{X\hat X_1\hat X_2}\in\calM(D_1,D_2)}
\calR(P_{X\hat X_1\hat X_2})
\subseteq
\calR_{\infty}(D_1,D_2|P_X).
\end{align}
Combining \eqref{eq:ach th1} and \eqref{eq:con th1} completes the proof of Theorem \ref{Theorem:1}.


\subsection{Achievability Proof of Lemma~\ref{Theorem:2}}\label{app:achievability}

To satisfy the prescribed product output distribution constraints, we first
introduce intermediate reconstructions whose induced distributions approximate the prescribed product distributions under TV distance, and further apply maximal coupling corrections to obtain the final reconstructions whose distributions match the prescribed distributions exactly. For clarity, we specify our coding scheme as follows.
Recall that $\calM_{\rm SR}(\cdot,\cdot)$ denotes the set of joint distributions satisfying the prescribed constraints in~\eqref{general-set}, and $\calL_{\rm SR}(\cdot)$ denotes the corresponding set of rate tuples defined in~\eqref{general-region}.
Let $\tilX_i^n$ for any $i\in[2]$ be the intermediate reconstruction generated by the $i$-th decoder.
Recall that $(D_1,D_2)$ are distortion levels.
Fix any joint distribution $P_{XU_1U_2\tilde X_1\tilde X_2} \in\calM_{\rm SR}(D_1,D_2)$, and any rate tuple $(R_1,R_2,R_{\rm c})\in\calL_\mathrm{SR}(P_{XU_1U_2\tilde X_1\tilde X_2})$. In the following, we use the distribution $P_{XU_1U_2\tilde X_1\tilde X_2}$ to construct our coding scheme and any marginal (conditional) distribution is induced by this joint distribution.
Let $M_1:=\lfloor2^{nR_1}\rfloor$, $M_2:=\lfloor2^{n(R_2-R_1)}\rfloor$, and $M_\rmc:=\lfloor2^{nR_{\rm c}}\rfloor$.
For any $(s_1,k)\in[M_1]\times[M_{\rmc}]$, independently generate $U_1^n(s_1,k)\sim P_{U_1}^n$. Conditioned on the first-layer codebook, for each $(s_1,s_2,k)\in[M_1]\times[M_2]\times[M_{\rmc}]$, independently generate $U_2^n(s_1,s_2,k)\sim P_{U_2|U_1}^n
(\cdot|U_1^n(s_1,k))$.
For any $s_1\in[M_1]$, $s_2\in[M_2]$ and $k\in[M_\rmc]$, let 
$\bZ_n:=\big\{U_1^n(s_1,k),U_2^n(s_1,s_2,k)\big\}$ be the resulting random codebook, and let $\bz_n$ be a realization of $\bZ_n$, whose codewords are denoted by $u_1^n(s_1,k)$ and $u_2^n(s_1,s_2,k)$. Let $K$ be uniformly distributed over $[M_{\rmc}]$ and independent of $X^n\sim P_X^n$.


With the above codebooks, our coding scheme operates as follows with a joint encoder $f$ and two decoders $(\phi_1,\phi_2)$. Given $(X^n,K)=(x^n,k)$, we use the following likelihood encoder to generate $(S_1,S_2)\in [M_1]\times [M_2]$:
\begin{align}
\label{eq:encoder def}
f(s_1,s_2|x^n,k)
&:=
\frac{P_{X|U_1U_2}^n(x^n|u_1^n(s_1,k),u_2^n(s_1,s_2,k))
}{\sum_{\tils_1\in[M_1],\tils_2\in[M_2]}
P_{X|U_1U_2}^n(x^n|u_1^n(\tils_1,k),u_2^n(\tils_1,\tils_2,k))
}.
\end{align}
If the denominator is zero, we define $f(1,1|x^n,k)=1$. 
Upon observing $(S_1,K)=(s_1,k)$, the decoder $\phi_1$ generates $\tilX_1^n\in\calX^n$ according to
\begin{align}
\phi_1(\tilx_1^n|s_1,k):= P^n_{\tilX_1|U_1}(\tilx_1^n|u_1^n(s_1,k)).\label{F_1}
\end{align}
Similarly, upon observing $(S_1,S_2,K)=(s_1,s_2,k)$, the decoder $\phi_2$ generates $\tilX_2^n\in\calX^n$ according to
\begin{align}
\phi_2(\tilde{x}_2^n|s_1,s_2,k):=  P^n_{\tilX_2|U_1U_2}(\tilde{x}_2^n|u_1^n(s_1,k),u_2^n(s_1,s_2,k)).\label{F_2}
\end{align}
We next construct the final reconstructions satisfying the exact output distribution constraints, i.e., $\hatX_i^n\sim\psi_i^n$ for any $i\in[2]$. 
For any $i\in[2]$, let
\begin{align}
\label{eq:def of T i}
T_i^{(n)}:\calX^n\to\calP(\calX^n),
\end{align}
specified later. The modified first- and second-layer decoders are defined as
\begin{align}
\hat{\phi}_1(\hatx_1^n|s_1,k)&:=
\sum_{\tilx_1^n\in\calX^n}
\phi_1(\tilx_1^n|s_1,k)
T_1^{(n)}(\hatx_1^n|\tilx_1^n),\label{eq:ach-modified-decoder1}\\
\hat{\phi}_2(\hatx_2^n|s_1,s_2,k)&:=
\sum_{\tilx_2^n\in\calX^n}
\phi_2(\tilx_2^n|s_1,s_2,k)
T_2^{(n)}(\hatx_2^n|\tilx_2^n),
\label{eq:ach-modified-decoder2}
\end{align}
respectively.

In the following, Section~\ref{subapp:aux tilX} shows that there exists a deterministic codebook for which the intermediate reconstructions $\{\tilX_i^n\}_{i\in[2]}$ satisfy the distortion constraints and whose distributions approximate the prescribed product distributions, whereas Section~\ref{subapp:final hatX} shows that the final reconstructions $\{\hatX_i^n\}_{i\in[2]}$ satisfy the exact output distribution constraints with only a vanishing additional distortion to complete the proof.

\subsubsection{Existence of a Good Deterministic Codebook}
\label{subapp:aux tilX}

For any $n\in\bbN$ and any realization $\bz_n$ of the random codebook $\bZ_n$, let $P^{\bz_n}_{X^nKS_1S_2\tilX_1^n\tilX_2^n}$ denote the joint distribution induced by $\bz_n$.\footnote{For clarity, we use $P^{\bZ_n}$ to denote the random probability distribution induced by the random codebook $\bZ_n$, and $P^{\bz_n}$ to denote the corresponding probability distribution for a fixed realization $\bZ_n=\bz_n$. The same convention applies to all other distributions.
}
For any $i\in[2]$, let $P_{\tilX_i^n}^{\bz_n}$ be the corresponding marginal distribution of $\tilX_i^n$. Recall that $\psi_i\in\calP(\calX)$ is the prescribed single-letter output distribution at layer $i\in[2]$, and its $n$-letter product distribution is given by $\psi_i^n(x_i^n)=\prod_{t\in[n]}\psi_i(x_{i,t})$, for any $x_i^n\in\calX^n$.
Given any two probability distributions $(P,Q)\in\mathcal{P}(\mathcal{X})^2$, we use $\|P-Q\|_{\mathrm{TV}}:=\frac{1}{2}\sum_{x\in\mathcal{X}}|P(x)-Q(x)|$ to denote the TV distance.

\begin{lemma}
\label{lem:good code}
There exists a sequence of deterministic codebooks $\{\bz_n\}_{n\in\bbN}$ such that, for each $i\in[2]$,
\begin{align}
\limsup_{n\to\infty}\bbE\big[\Delta_i^{(n)}(X^n,\tilX_i^n)\big]&\le D_i,
\label{eq:good code distortion}\\
\lim_{n\to\infty}\big\|P_{\tilX_i^n}^{\bz_n}-\psi_i^n\big\|_{\rm TV}&=0.
\label{eq:good code perception}
\end{align}
\end{lemma}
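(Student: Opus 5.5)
The plan is to use the standard likelihood-encoder and soft-covering argument of Cuff and Song, adapted to the superposition codebook. First, define an idealized auxiliary distribution $Q^{\bZ_n}$ on $(K,S_1,S_2,U_1^n,U_2^n,X^n,\tilX_1^n,\tilX_2^n)$. Under $Q$, the triple $(K,S_1,S_2)$ is uniform on $[M_\rmc]\times[M_1]\times[M_2]$. Given the indices, $X^n$ is drawn from $P^n_{X|U_1U_2}(\cdot|u_1^n(s_1,k),u_2^n(s_1,s_2,k))$. The reconstructions are then drawn through the decoders \eqref{F_1}--\eqref{F_2}; equivalently, $\tilX_1^n\sim P^n_{\tilX_1|U_1}$ and $\tilX_2^n\sim P^n_{\tilX_2|U_1U_2}$, which is consistent with the Markov chains in \eqref{general-set}.

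By construction, the conditional of $(S_1,S_2)$ given $(X^n,K)$ under $Q$ is exactly the likelihood encoder \eqref{eq:encoder def}, and the decoders coincide. Hence the true induced distribution $P^{\bZ_n}$ and $Q^{\bZ_n}$ differ only through the $(X^n,K)$ marginal:
\[
\|P^{\bZ_n}-Q^{\bZ_n}\|_{\rm TV}=\|P_X^n\times \mathrm{Unif}[M_\rmc]-Q^{\bZ_n}_{X^nK}\|_{\rm TV}.
\]
For each fixed $k$, superposition soft covering \cite[Lemma 4]{Goldfeld_2020_Wiretap_Channels} makes the expectation of the right-hand side vanish exponentially when $R_1>I(X;U_1)$ and $R_2>I(X;U_1,U_2)$. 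Rate tuples on the boundary are handled by the usual closure argument.

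Next comes the distortion. Under $Q$, averaging over the random codebook, $(U_1^n,U_2^n,X^n,\tilX_1^n,\tilX_2^n)$ is i.i.d.\ according to $P_{U_1U_2X\tilX_1\tilX_2}$. Therefore $\bbE_{\bZ_n}\bbE_Q[\Delta_i^{(n)}(X^n,\tilX_i^n)]=\bbE[\Delta_i(X,\tilX_i)]\le D_i$. Because the distortion measures are bounded by some $\Delta_{\max}$, we have
\[
\bbE_{\bZ_n}\bbE_P[\Delta_i^{(n)}]\le D_i+\Delta_{\max}\,\bbE_{\bZ_n}\|P^{\bZ_n}-Q^{\bZ_n}\|_{\rm TV}\to D_i .
\]

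For the output distributions, it suffices to show $\bbE_{\bZ_n}\|Q^{\bZ_n}_{\tilX_i^n}-\psi_i^n\|_{\rm TV}\to0$; the triangle inequality then transfers this to $P^{\bZ_n}$. Under $Q$, $\tilX_1^n$ is generated from a codeword indexed uniformly by $(s_1,k)$ over $2^{n(R_1+R_\rmc)}$ codewords through the channel $P_{\tilX_1|U_1}$, whose output marginal is $P_{\tilX_1}=\psi_1$. Standard soft covering therefore gives vanishing TV distance when $R_1+R_\rmc>I(\tilX_1;U_1)$.

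For $\tilX_2^n$, the output is produced through $P_{\tilX_2|U_1U_2}$ from a superposition codebook indexed by $(s_1,s_2,k)$. The output marginal is $\psi_2$, so superposition soft covering applies again. Treating $K$ as part of both layers, the inner layer has rate $R_1+R_\rmc$ and the total rate is $R_2+R_\rmc$. The required conditions are $R_1+R_\rmc>I(\tilX_2;U_1)$ and $R_2+R_\rmc>I(\tilX_2;U_1,U_2)$, which are exactly the remaining inequalities in $\calL_{\rm SR}$.

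The main obstacle is matching the version of the superposition soft-covering lemma to these rate conditions. The lemma in \cite{Goldfeld_2020_Wiretap_Channels} typically requires the inner-layer condition $R_1+R_\rmc>I(\tilX_2;U_1)$ and the total condition. If the cited form instead needs the conditional condition $R_2-R_1>I(\tilX_2;U_2|U_1)$, I will first try to show it is not needed: if only the total condition holds, the collection of all $(u_1^n,u_2^n)$ pairs still forms an i.i.d.-like cover at rate $R_2+R_\rmc$ after conditioning on a typical inner layer. I would verify this by a direct second-moment (KL) bound.

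Finally, I need a single codebook. The four quantities are the two distortion excesses and the two TV distances, and each has expectation over $\bZ_n$ converging to the target or to $0$. Let $\epsilon_n$ be the sum of the expected excesses and TV terms. By Markov's inequality, each of the four events where a quantity exceeds its target by more than $4\sqrt{\epsilon_n}$ has probability at most $\sqrt{\epsilon_n}/4$, in the appropriate normalized form. A union bound then gives, for large $n$, a realization $\bz_n$ satisfying all four bounds simultaneously, which establishes \eqref{eq:good code distortion} and \eqref{eq:good code perception}.
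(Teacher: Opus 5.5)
Your proposal follows the paper's proof closely up to the final step. You use the same auxiliary distribution $Q$, reduce $\|P-Q\|_{\rm TV}$ to the $(X^n,K)$ marginal averaged over $k$, and impose the same ordinary and superposition soft-covering conditions. The superposition lemma you cite is stated with the inner-layer and total-rate conditions, so your worry about needing $R_2-R_1>I(\tilX_2;U_2|U_1)$ is moot.

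\textbf{The gap is in the derandomization step.} Markov's inequality gives a small bound only for a nonnegative random variable with small mean. The two TV terms qualify; the distortion terms do not. Write $D_i^P(\bz_n)$ for $\bbE_{P^{\bz_n}}[\Delta_i^{(n)}(X^n,\tilX_i^n)]$.
\begin{itemize}
\item As a random variable, $D_i^P(\bZ_n)$ has mean close to $D_i$, not $0$. Markov therefore bounds $\Pr\{D_i^P(\bZ_n)>D_i+\gamma\}$ only by roughly $D_i/(D_i+\gamma)$, which is near $1$ when $D_i>0$. With two distortion constraints and small slack, these bounds already sum to more than $1$.
\item The excess $D_i^P(\bZ_n)-D_i$ is signed. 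Knowing its mean is at most $o(1)$ says nothing about its positive part, since $\bbE[Y^+]\ge(\bbE[Y])^+$.
\end{itemize}
So your claim that each distortion event has probability at most $\sqrt{\epsilon_n}/4$ does not follow. You need concentration of the codebook-dependent distortion, not just control of its mean.

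The paper supplies this concentration as follows.
\begin{itemize}
\item It writes $D_i^Q(\bZ_n)$ as the uniform average over $(s_1,s_2,k)$ of the per-letter averages $G_i(U_1^n(s_1,k),U_2^n(s_1,s_2,k))$.
\item Terms with different $(s_1,k)$ are independent, and outer codewords are conditionally i.i.d.\ given their inner codeword. This gives $\mathrm{Var}(D_i^Q(\bZ_n))\le\Delta_{\max}^2/(4n)$.
\item Chebyshev then controls $\Pr\{D_i^Q(\bZ_n)>D_i+\delta\}$, and $D_i^P\le D_i^Q+\Delta_{\max}T$ finishes the step.
\end{itemize}

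An equally valid repair keeps your Markov-only structure but uses a nonnegative surrogate. Apply Markov to $\Pr_{Q^{\bZ_n}}\{\Delta_i^{(n)}(X^n,\tilX_i^n)>D_i+\epsilon\}$. Its codebook average is the i.i.d.\ probability $\Pr\{\frac{1}{n}\sum_t\Delta_i(X_t,\tilX_{i,t})>D_i+\epsilon\}$, which vanishes by the weak law of large numbers. Then use $D_i^P(\bz_n)\le D_i+\epsilon+\Delta_{\max}\big(\Pr_{Q^{\bz_n}}\{\Delta_i^{(n)}>D_i+\epsilon\}+T(\bz_n)\big)$, and let $\epsilon\to0$ slowly with $n$.
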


The proof of Lemma~\ref{lem:good code} is provided in Appendix~\ref{app:lem good code}. Specifically, we first relate the actual distribution to an auxiliary distribution, and show that the actual distribution satisfies the distortion and perception constraints via the auxiliary distribution on average over the random codebook. Furthermore, we show that there exists a deterministic codebook for which the distortion and perception constraints are satisfied simultaneously.

\subsubsection{Exact Output Distribution and Final Steps}
\label{subapp:final hatX}

Let $\Pi(P,Q)$ be the set of all couplings of distributions $P$ and $Q$. It follows from the maximal coupling theorem \cite[Proposition 4.7]{Levin_2017_Markov_Chains_and_Mixing_Times} that there exists a coupling $\Gamma_i^{(n)}\in\Pi\big(P_{\tilX_i^n}^{\bz_n},\psi_i^n\big)$ for any $i\in[2]$ such that
\begin{align}
\Pr_{\Gamma_i^{(n)}}
\big\{\tilX_i^n\neq\hatX_i^n\big\}&=
\big\|P_{\tilX_i^n}^{\bz_n}-\psi_i^n
\big\|_{\rm TV}.
\label{eq:coupling bound}
\end{align}
For any $i\in[2]$ and $\hatx_i^n\in\calX^n$, it follows from the definition of $\Pi(\cdot,\cdot)$ that the marginal distribution of $\Gamma_i^{(n)}$ satisfies
\begin{align}
\sum_{\tilx_i^n\in\calX^n}\Gamma_i^{(n)}(\tilx_i^n,\hatx_i^n)
=\psi_i^n(\hatx_i^n).\label{eq:marginal result}
\end{align}
For any $P_{\tilX_i^n}^{\bz_n}(\tilx_i^n)>0$, $\hatx_i^n\in\calX^n$ and $i\in[2]$, let
\begin{align}
T_i^{(n)}(\hatx_i^n|\tilx_i^n):=\frac{\Gamma_i^{(n)}(\tilx_i^n,\hatx_i^n)}{P_{\tilX_i^n}^{\bz_n}(\tilx_i^n)}\label{eq:redef Ti},
\end{align}
and $T_i^{(n)}(\cdot|\tilx_i^n)$ may be chosen arbitrarily when $P_{\tilX_i^n}^{\bz_n}(\tilx_i^n)=0$.
Consequently, for any $i\in[2]$ and $\hatx_i^n\in\calX^n$, we obtain
\begin{align}
P_{\hatX_i^n}(\hatx_i^n)
&=\sum_{\tilx_i^n\in\calX^n}P_{\tilX_i^n}^{\bz_n}(\tilx_i^n)T_i^{(n)}(\hatx_i^n|\tilx_i^n)\label{eq:sumTi -1}\\
&=\sum_{\tilx_i^n\in\calX^n}\Gamma_i^{(n)}(\tilx_i^n,\hatx_i^n)\label{eq:sumTi -2}\\
&=\psi_i^n(\hatx_i^n),\label{eq:sumTi -3}
\end{align}
where \eqref{eq:sumTi -1} follows from the definition of the modified decoder in \eqref{eq:ach-modified-decoder1} and \eqref{eq:ach-modified-decoder2} and the law of total probability, \eqref{eq:sumTi -2} follows from \eqref{eq:redef Ti}, and \eqref{eq:sumTi -3} follows from \eqref{eq:marginal result}.
Recall that we use $\mathbbm{1}\{\cdot\}$ to denote the indicator function. Let $\Delta_{\max}:=\max_{i\in[2]}\max_{(x,\hatx)\in\calX^2}\Delta_i(x,\hatx)$.
For any $(x^n,\tilde x_i^n,\hat x_i^n)\in(\calX^n)^3$, it follows from \eqref{normalized-n-letter-distortion} that
\begin{align}
\Delta_i^{(n)}(x^n,\hat x_i^n)&\le\Delta_i^{(n)}(x^n,\tilde x_i^n)+\Delta_{\max}
\mathbbm{1} \{\hat x_i^n\neq \tilde x_i^n\}. \label{eq:pointwise-dist-bound}
\end{align}
As $n\to\infty$, taking expectations on \eqref{eq:pointwise-dist-bound} leads to
\begin{align}
\bbE\big[\Delta_i^{(n)}(X^n,\hat X_i^n)\big]
&\le\bbE\big[\Delta_i^{(n)}(X^n,\tilde X_i^n)\big]+\Delta_{\max}\Pr\{\hat X_i^n\neq \tilde X_i^n\}\\*
&\le D_i,
\label{eq:dist-after-coupling_2}
\end{align}
where \eqref{eq:dist-after-coupling_2} follows from Lemma \ref{lem:good code} and \eqref{eq:coupling bound} and the fact that $\Delta_{\max}$ is bounded.
Combining Lemma \ref{lem:good code}, \eqref{eq:sumTi -3} and \eqref{eq:dist-after-coupling_2} completes the achievability proof of Lemma~\ref{Theorem:2}.

\subsection{Converse Proof of Lemma~\ref{Theorem:2}}\label{app:converse}
Fix any achievable rate tuple $(R_1,R_2,R_{\rm c})$ in Definition~\ref{def:exact-stochastic-region}, we shall show that there exists a joint distribution $P_{XU_1U_2\hat X_1\hat X_2}\in\calM_{\rm SR}(D_1,D_2)$ such that
\begin{align}
(R_1,R_2,R_{\rm c})\in\calL_{\rm SR}(P_{XU_1U_2\hat X_1\hat X_2}).
\end{align}
In the following, we first derive the five required single-letter rate constraints of $\calL_{\rm SR}(P_{XU_1U_2\hat X_1\hat X_2})$ in \eqref{general-region}, and then verify that the resulting single-letter joint distribution belongs to $\calM_{\rm SR}(D_1,D_2)$.
Let $S_1 \in [M_1]$ and $S_2 \in [M_2]$ be the first- and refinement-layer messages for the source sequence $X^n$, respectively. Let $K\in[M_\rmc]$ be the common randomness, where $K$ is independent of $X^n$.
As $n\to\infty$, for any achievable rate $R_1$, it follows from \eqref{eq:common-R1} that
\begin{align}
R_1
&\geq \frac{1}{n}\log M_1\label{eq:conv-r1-a}\\
&\geq \frac{1}{n}I(X^n;S_1|K)\label{eq:conv-r1-d}\\
&= \frac{1}{n}I(X^n;S_1,K)\label{eq:conv-r1-e}\\
&= \frac{1}{n}\sum_{t\in[n]}I(X_t;S_1,K|X^{t-1})\label{eq:conv-r1-f}\\
&\geq \frac{1}{n}\sum_{t\in[n]}I(X_t;S_1,K),\label{eq:conv-r1-g}
\end{align}
where \eqref{eq:conv-r1-d} follows from the fact that $\log M_1\ge H(S_1|K)\ge I(X^n;S_1|K)$, \eqref{eq:conv-r1-e} follows from the independence of $K$ and $X^n$, \eqref{eq:conv-r1-f} follows from the chain rule for mutual
information, and \eqref{eq:conv-r1-g} follows from $X^n\sim P_X^n$ and the fact that conditioning reduces entropy.
Similarly, we obtain
\begin{align}
R_2\geq\frac{1}{n}\sum_{t\in[n]}I(X_t;S_1,S_2,K).
\label{eq:conv-r2}
\end{align}
It follows from \eqref{eq:common-R1} and \eqref{eq:common-Rc} that
\begin{align}
R_1+R_{\rm c}
&\geq\frac{1}{n}\log(M_1M_{\rm c})\label{eq:conv-rc1-a}\\*
&\geq \frac{1}{n}I(\hat X_1^n;S_1,K)\label{eq:conv-rc1-c}\\*
&=\frac{1}{n}\sum_{t\in[n]}I(\hat X_{1,t};S_1,K|\hat X_1^{t-1})\label{eq:conv-rc1-d}\\*
&\geq\frac{1}{n}\sum_{t\in[n]}I(\hat X_{1,t};S_1,K),\label{eq:conv-rc1-e}
\end{align}
where \eqref{eq:conv-rc1-c} follows from the fact that $\log(M_1M_{\rm c})\ge H(S_1,K)\ge I(\hat X_1^n;S_1,K)$, \eqref{eq:conv-rc1-d} follows from the chain rule for mutual information, and \eqref{eq:conv-rc1-e} follows from $\hat X_1^n\sim\psi_1^n$ and the fact that conditioning reduces entropy.
Similarly, we obtain
\begin{align}
R_1+R_{\rm c}&\geq\frac{1}{n}\sum_{t\in[n]}I(\hat X_{2,t};S_1,K),
\label{eq:conv-rc2}\\
R_2+R_{\rm c}&\geq\frac{1}{n}\sum_{t\in[n]}I(\hat X_{2,t};S_1,S_2,K).
\label{eq:conv-sum}
\end{align}
Let $J$ be uniformly distributed over $[n]$ and independent of all other random variables, and let $U_1:=(S_1,K,J)$, $U_2:=S_2$.
It follows from \eqref{eq:conv-r1-g} that
\begin{align}
R_1
&\ge I(X_J;S_1,K|J)\label{eq:conv-ts-r1-a}\\*
&=I(X_J;S_1,K,J)\label{eq:conv-ts-r1-b}\\*
&=I(X_J;U_1),\label{eq:conv-ts-r1-c}
\end{align}
where \eqref{eq:conv-ts-r1-a} follows from the uniformity and independence of $J$, \eqref{eq:conv-ts-r1-b} follows since $X_J$ is independent of $J$, and \eqref{eq:conv-ts-r1-c} follows from the fact that $U_1=(S_1,K,J)$. Similarly, we obtain
\begin{align}
R_2&\geq I(X_J;U_1,U_2),\label{eq:conv-sl-r2}\\*
R_1+R_{\rm c}&\geq I(\hat X_{1,J};U_1),\label{eq:conv-sl-rc1}\\*
R_1+R_{\rm c}&\geq I(\hat X_{2,J};U_1),\label{eq:conv-sl-rc2}\\*
R_2+R_{\rm c}&\geq I(\hat X_{2,J};U_1,U_2).\label{eq:conv-sl-sum}
\end{align}

We next verify that the single-letter joint distribution
$P_{X_JU_1U_2\hat X_{1,J}\hat X_{2,J}}$ belongs to
$\calM_{\rm SR}(D_1,D_2)$.
Since $X_t\sim P_X$ and $\hatX_{i,t}\sim\psi_i$ for any $t\in[n]$ and $i\in[2]$, the uniformity of $J$ yields
\begin{align}
X_J&\sim P_X,\label{eq:conv-source-marginal}\\
\hatX_{i,J}&\sim\psi_i.\label{eq:conv-source-marginal-hat}
\end{align}
Moreover, for any $i\in[2]$ and any achievable code defined in Definition~\ref{def:exact-stochastic-region}, it follows that
\begin{align}
\bbE[\Delta_i(X_J,\hat X_{i,J})]
&=
\frac{1}{n}
\sum_{t\in[n]}
\bbE[\Delta_i(X_t,\hat X_{i,t})]
\label{eq:conv-dist-sl-b}\\*
&=
\bbE[\Delta_i^{(n)}(X^n,\hat X_i^n)]
\label{eq:conv-dist-sl-c}\\*
&\leq
D_i,
\label{eq:conv-dist-sl-d}
\end{align}
where \eqref{eq:conv-dist-sl-b} follows from the
uniformity and independence of $J$, \eqref{eq:conv-dist-sl-c}
follows from \eqref{normalized-n-letter-distortion}, and
\eqref{eq:conv-dist-sl-d} follows from \eqref{eq:common-distortion}.
Recall that the first decoder uses $(S_1,K)$, whereas the second decoder uses $(S_1,S_2,K)$. Since the two reconstruction sequences are conditionally independent given their respective decoder inputs, we obtain
\begin{align}
P_{\hat X_1^n\hat X_2^n| X^n,S_1,S_2,K} = P_{\hat X_1^n|S_1,K} P_{\hat X_2^n| S_1,S_2,K}. 
\end{align}
Selecting the $J$-th coordinates and marginalizing over the remaining coordinates yield
\begin{align}
P_{\hat X_{1,J}\hat X_{2,J}
|X_J,S_1,S_2,K,J}
=P_{\hat X_{1,J}|S_1,K,J}
P_{\hat X_{2,J}|S_1,S_2,K,J}.
\label{eq:conv-decoder-sl-factorization}
\end{align}
Recall that $U_1=(S_1,K,J)$ and $U_2=S_2$. It follows that
\begin{align}
P_{\hat X_{1,J}\hat X_{2,J}|X_J,U_1,U_2}
=
P_{\hat X_{1,J}|U_1}
P_{\hat X_{2,J}|U_1,U_2},
\label{eq:conv-markov-factorization}
\end{align}
implying that $U_1$ and $U_2$ satisfy the Markov conditions $\hatX_{1,J}-U_1-(X_J,U_2)$ and $\hatX_{2,J}-(U_1,U_2)-(X_J,\hat X_{1,J})$. 
Consistent with \cite[Section 15.8]{Thomas_2006_Elements-of-information-theory}, the converse proof of Lemma~\ref{Theorem:2} is completed by combining the independence of $X_J$ and $J$ with \eqref{eq:conv-ts-r1-c}, \eqref{eq:conv-sl-r2}--\eqref{eq:conv-source-marginal-hat}, \eqref{eq:conv-dist-sl-d} and \eqref{eq:conv-markov-factorization}.

\section{Conclusion}
\label{sec:Conclusion}
We revisited the SR problem and characterized its RDP region under strong-sense perfect perception and unlimited common randomness. Our result extends P2P RDP theory to layered source coding and shows that the classical SR rate structure is preserved, while the admissible reconstruction distributions are constrained to match the source distribution. For the Bernoulli source under Hamming distortion, we further obtained a closed-form region and showed that the P2P RDP optima at two distortion levels can be achieved simultaneously. Hence, the Bernoulli source remains successively refinable under strong-sense perfect perception. Future work includes generalizations to continuous alphabet sources, finite-blocklength analysis, and multi-stage SR systems.

\appendix




\subsection{Proof of Corollary~\ref{Corollary}}
\label{sec:Corollary1}

Let $(X,\hatX)\in\calX^2$ denote the source and reconstruction random variables for the Bernoulli source under Hamming distortion, respectively. For any $(x,y)\in\{0,1\}^2$, recall that $\Delta(x,y)=\mathbbm{1}\{x\neq y\}$ is the Hamming distortion measure. It follows from~\cite[Example 1]{chen_2022_on-the-rate-distortion-perception-function} that $\varphi_\rho(D)$ defined in~\eqref{eq:def_phi_ro} equals the minimum of $I(X;\hat X)$ over all $P_{X\hat X}\in\calP(\{0,1\}^2)$ satisfying $P_X=P_{\hat X}=\mathrm{Bern}(\rho)$ and $\bbE[\Delta(X,\hat X)]\le D$, i.e.,
\begin{align}
\varphi_\rho(D)=\min_{P_{X\hat X}} I(X;\hat X).
\label{eq:binary-variational}
\end{align}
For any $d\in[0,D_{\max}]$, define a binary channel $T(d):\{0,1\}\to\calP(\{0,1\})$ with transition matrix
\begin{align}
\label{eq:def of bT}
\bT(d):=
\begin{pmatrix}
T_d(0|0) & T_d(1|0)\\
T_d(0|1) & T_d(1|1)
\end{pmatrix}=
\begin{pmatrix}
1-\frac{d}{2(1-\rho)}&
\frac{d}{2(1-\rho)}\\
\frac{d}{2\rho}&
1-\frac{d}{2\rho}
\end{pmatrix}.
\end{align}
For any $D_2\in[0,D_{\max})$ and $D_1\in[D_2,D_{\max}]$, let
\begin{align}
\delta:=\frac{D_1-D_2}{1-D_2/D_{\max}},
\label{eq:binary-delta}
\end{align}
and $\delta=0$ when $D_2=D_{\max}$.
Furthermore, define a joint distribution $\tilP_{X\hatX_1\hatX_2}$ with
\begin{align}
\tilP_{X\hatX_1\hatX_2}(x,\hatx_1,\hatx_2) := P_X(x)T_{D_2}(\hatx_2|x) T_\delta(\hatx_1|\hatx_2),
\label{eq:binary-joint-construction}
\end{align}
for any $(x,\hat x_1,\hat x_2)\in\{0,1\}^3$.
We shall show in Appendix~\ref{subapp:verify tilP} that
$\tilP_{X\hat X_1\hat X_2}\in\calM(D_1,D_2)$ and that, under $\tilP_{X\hat X_1\hat X_2}$
\begin{align}
I(X;\hat X_1)
&=\varphi_\rho(D_1),\\
I(X;\hat X_1,\hat X_2)
&=\varphi_\rho(D_2).
\end{align}
Consequently, it follows from Theorem~\ref{Theorem:1} that
\begin{align}
\Big\{(R_1,R_2)\in\bbR_+^2:
R_1\ge\varphi_\rho(D_1),~
R_2\ge\varphi_\rho(D_2)
\Big\}
\subseteq
\calR_\infty(D_1,D_2|P_X).
\label{eq:binary-achievability-inclusion}
\end{align}

Recall from \eqref{def:M(D1,D2)} that $\calM(D_1,D_2)$ denotes the set of joint distributions satisfying the distortion and strong-sense perfect perception constraints. For any $P_{X\hat X_1\hat X_2}\in\calM(D_1,D_2)$, the marginal distribution $P_{X\hatX_i}$ satisfies $P_X=P_{\hat X_i}=\operatorname{Bern}(\rho)$ and $\bbE[\Delta(X,\hat X_i)]\le D_i$ for any $i\in[2]$. It follows from \eqref{eq:binary-variational} that for any $i\in[2]$,
\begin{align}
I(X;\hatX_i)\ge\varphi_\rho(D_i).
\end{align}
Furthermore, we obtain
\begin{align}
I(X;\hatX_1,\hat X_2)\ge I(X;\hatX_2)\ge\varphi_\rho(D_2).
\end{align}
Consequently, it follows from Theorem~\ref{Theorem:1} that
\begin{align}
\calR_\infty(D_1,D_2|P_X)\subseteq\Big\{(R_1,R_2)\in\bbR_+^2:
R_1\ge\varphi_\rho(D_1),~R_2\ge\varphi_\rho(D_2)\Big\}.
\label{eq:binary-converse-inclusion}
\end{align}
Combining \eqref{eq:binary-achievability-inclusion} and \eqref{eq:binary-converse-inclusion} completes the proof of Corollary \ref{Corollary}.

\subsubsection{Verification of $\tilP_{X\hat X_1\hat X_2}$}
\label{subapp:verify tilP}
We first verify that $\tilP_{X\hat X_1\hat X_2}$ satisfies the perception and distortion constraints in \eqref{def:M(D1,D2)}. The definitions of $\bT(\cdot)$ in \eqref{eq:def of bT} and $\delta$ in \eqref{eq:binary-delta} and the matrix multiplication lead to
\begin{align}
\label{eq:T delta d1d2}
\bT(D_2)\bT(\delta)=\bT\bigg(D_2+\delta\Big(1-\frac{D_2}{D_{\max}}\Big)\bigg)=\bT(D_1).
\end{align}
Recall that $X\sim\mathrm{Bern(\rho)}$. It follows from \eqref{eq:binary-joint-construction} and \eqref{eq:T delta d1d2} that the transition matrix from $X$ to $\hat X_i$ is $\bT(D_i)$ for any $i\in[2]$. Let $\tilP_{\hat X_i}$ denote the marginal distribution induced by $\tilP_{X\hat X_1\hat X_2}$ for any $i\in[2]$. For any $i\in[2]$, it follows that
\begin{align}
\tilP_{\hatX_i}(1)
&=(1-\rho)T_{D_i}(1|0)+\rho T_{D_i}(1|1)\\*
&=(1-\rho)\frac{D_i}{2(1-\rho)}+\rho\Big(1-\frac{D_i}{2\rho}\Big)\\*
&=\rho,\label{eq:hatX bern}
\end{align}
implying that $\tilP_{\hatX_i}=P_X=\mathrm{Bern}(\rho)$. Furthermore, for any $i\in[2]$, it follows that
\begin{align}
\bbE[\Delta(X,\hatX_i)]
&=(1-\rho)T_{D_i}(1|0)+\rho T_{D_i}(0|1)\\*
&=(1-\rho)\frac{D_i}{2(1-\rho)}+\rho\frac{D_i}{2\rho}\\*
&=D_i.\label{eq:distortion tilP}
\end{align}
Consequently, we obtain $\tilP_{X\hatX_1\hat X_2}\in\calM(D_1,D_2)$.
Under $\tilP_{X\hat X_1\hat X_2}$, it follows that
\begin{align}
I(X;\hat X_i)
&=H(\hat X_i)-H(\hat X_i|X)\label{eq:I cal -1}\\
&=H_{\rmb}(\rho)-(1-\rho)H_{\rmb}\Big(\frac{D_i}{2(1-\rho)}\Big)-\rho H_{\rmb}\Big(\frac{D_i}{2\rho}\Big)\label{eq:I cal -2}\\
&=\varphi_\rho(D_i),
\label{eq:I cal -3}
\end{align}
where \eqref{eq:I cal -2} follows from the fact that $\hatX_i\sim\mathrm{Bern}(\rho)$ and $\bT(D_i)$ is the transition matrix for any $i\in[2]$, and \eqref{eq:I cal -3} follows from \eqref{eq:def_phi_ro}. It follows from \eqref{eq:binary-joint-construction} that $X-\hatX_2-\hatX_1$ forms a Markov chain. Consequently, we obtain
\begin{align}
I(X;\hat X_1,\hat X_2)
&=I(X;\hat X_2)+I(X;\hat X_1|\hat X_2)\\*
&=I(X;\hat X_2)\label{eq:II cal-2}\\*
&=\varphi_\rho(D_2),\label{eq:II cal-3}
\end{align}
where \eqref{eq:II cal-2} follows from the fact that $I(X;\hat X_1|\hat X_2)=0$, and \eqref{eq:II cal-3} follows from \eqref{eq:def_phi_ro}.
Combining \eqref{eq:hatX bern}, \eqref{eq:distortion tilP}, \eqref{eq:I cal -3} and \eqref{eq:II cal-3} completes the proof.

\subsection{Proof of Lemma \ref{lem:good code}}
\label{app:lem good code}
The proof consists of the following three steps: \emph{1)} construct an auxiliary distribution and relate the desired distortion and perception properties under the actual distribution to the auxiliary distribution; \emph{2)} show that the reconstructions $\{\tilX_i^n\}_{i\in[2]}$ satisfy the distortion constraint and whose distributions approximate the prescribed distributions $\{\psi_i^n\}_{i\in[2]}$ under the TV distance on average over the random codebook; and \emph{3)} show that there exists a deterministic codebook realization for which the distortion and perception bounds hold simultaneously.

\subsubsection{Relating the Actual and Auxiliary distributions}

Fix any $n\in\bbN$. To facilitate the analysis of the actual distribution $P^{\bz_n}_{X^nKS_1S_2\tilX_1^n\tilX_2^n}$, we introduce an auxiliary distribution, which uses the same codebook and decoders $\{\phi_i\}_{i\in[2]}$ defined in Section \ref{app:achievability}, but chooses $(S_1,S_2,K)$ uniformly and generates $X^n$ through the reverse test channel $P_{X|U_1U_2}^n$. For any realization $\bz_n$ of the random codebook $\bZ_n$ and any $(x^n,\tilx_1^n,\tilx_2^n,s_1,s_2,k) \in(\calX^{n})^3\times[M_1]\times[M_2]\times[M_{\rm c}]$, define the auxiliary distribution $Q^{\bz_n}_{X^n\tilX_1^n\tilX_2^nS_1S_2K}$ as
\begin{align}
&Q^{\bz_n}_{X^n\tilX_1^n\tilX_2^nS_1S_2K}
(x^n,\tilx_1^n,\tilx_2^n,s_1,s_2,k)
\nn\\
&\qquad:=
\frac{1}{M_1M_2M_{\rm c}}
P_{X|U_1U_2}^n
\big(
x^n\big|
u_1^n(s_1,k),
u_2^n(s_1,s_2,k)
\big)
\phi_1(\tilx_1^n|s_1,k)
\phi_2(\tilx_2^n|s_1,s_2,k).
\label{eq:auxiliary-distribution}
\end{align}
Recall from Section \ref{app:achievability} that, for any fixed codebook realization $\bz_n$, the actual distribution is given by
\begin{align}
P^{\bz_n}_{X^n\tilX_1^n\tilX_2^nS_1S_2K}
(x^n,\tilx_1^n,\tilx_2^n,s_1,s_2,k)
=\frac{1}{M_{\rm c}}P_X^n(x^n)
f(s_1,s_2|x^n,k)
\phi_1(\tilx_1^n|s_1,k)
\phi_2(\tilx_2^n|s_1,s_2,k).
\label{eq:actual-distribution}
\end{align}
For any fixed codebook realization $\bz_n$ and any $i\in[2]$, define
\begin{align}
T(\bz_n)&:=
\big\|Q^{\bz_n}_{X^n\tilX_1^n\tilX_2^nS_1S_2K}
-P^{\bz_n}_{X^n\tilX_1^n\tilX_2^nS_1S_2K}
\big\|_{\rm TV},
\label{eq:T(C)-def}\\
V_i^Q(\bz_n)
&:=\big\|Q_{\tilX_i^n}^{\bz_n}-\psi_i^n\big\|_{\rm TV},
\label{eq:Vi-def}\\
D_i^Q(\bz_n)
&:=\bbE_{Q^{\bz_n}_{X^n\tilX_i^n}}
\big[\Delta_i^{(n)}(X^n,\tilX_i^n)\big].
\label{eq:DQ-def}
\end{align}
When the codebook is random, $T(\bZ_n)$, $V_i^Q(\bZ_n)$, and $D_i^Q(\bZ_n)$ denote the corresponding random variables. Recall that $\Delta_{\max}=\max_{i\in[2]}\max_{(x,\hatx)\in\calX^2}\Delta_i(x,\hatx)$. It follows that, for any fixed codebook realization $\bz_n$ and $i\in[2]$,
\begin{align}
\bbE_{P_{X^n\tilX_i^n}^{\bz_n}}\big[\Delta_i^{(n)}(X^n,\tilX_i^n)\big]
&\le \bbE_{Q_{X^n\tilX_i^n}^{\bz_n}}\big[\Delta_i^{(n)}(X^n,\tilX_i^n)\big]+\Delta_{\max}\big\|P_{X^n\tilX_i^n}^{\bz_n}-Q_{X^n\tilX_i^n}^{\bz_n}\big\|_\mathrm{TV}
\label{eq:distortion p q actual-1}\\
&\le \bbE_{Q_{X^n\tilX_i^n}^{\bz_n}}\big[\Delta_i^{(n)}(X^n,\tilX_i^n)\big]+\Delta_{\max}\big\|Q_{X^nKS_1S_2\tilX_1^n\tilX_2^n}^{\bz_n}-P_{X^nKS_1S_2\tilX_1^n\tilX_2^n}^{\bz_n}\big\|_{\rm{TV}}\label{eq:distortion p q actual-2}\\
&= D_i^Q(\bz_n)+\Delta_{\max}T(\bz_n),
\label{eq:distortion p q actual-4}
\end{align}
where \eqref{eq:distortion p q actual-1} follows from the property of TV distance, \eqref{eq:distortion p q actual-2} follows from the contraction of TV distance under marginalization, \eqref{eq:distortion p q actual-4} follows from \eqref{eq:T(C)-def} and \eqref{eq:DQ-def}. Similarly, it follows that, for any fixed codebook realization $\bz_n$ and $i\in[2]$,
\begin{align}
\big\|P_{\tilX_i^n}^{\bz_n}-\psi_i^n\big\|_\mathrm{TV}
&\le \big\|P_{\tilX_i^n}^{\bz_n}-Q_{\tilX_i^n}^{\bz_n}\big\|_\mathrm{TV}+\big\|Q_{\tilX_i^n}^{\bz_n}-\psi_i^n\big\|_\mathrm{TV}\label{eq:perception tv P-1}\\*
&\le \big\|Q_{X^nKS_1S_2\tilX_1^n\tilX_2^n}^{\bz_n}-P_{X^nKS_1S_2\tilX_1^n\tilX_2^n}^{\bz_n}\big\|_{\rm{TV}}+\big\|Q_{\tilX_i^n}^{\bz_n}-\psi_i^n\big\|_\mathrm{TV}\label{eq:perception tv P-2}\\
&=T({\bz_n})+V_i^Q({\bz_n}),
\label{eq:perception tv P-3}
\end{align}
where \eqref{eq:perception tv P-1} follows from the triangle inequality, \eqref{eq:perception tv P-2} follows from the contraction of TV distance under marginalization, and \eqref{eq:perception tv P-3} follows from \eqref{eq:T(C)-def} and \eqref{eq:Vi-def}.
Based on \eqref{eq:distortion p q actual-4} and \eqref{eq:perception tv P-3}, we shall show in Appendix \ref{subapp:dis and per} that for any $i\in[2]$, $\lim_{n\to\infty}\bbE_{\bZ_n}[T(\bZ_n)]=0$, $\bbE_{\bZ_n}[D_i^Q(\bZ_n)]\le D_i$, and $\lim_{n\to\infty}\bbE_{\bZ_n}[V_i^Q(\bZ_n)]=0$, to verify that the actual distribution satisfies the distortion and perception constraints on average over the random codebook.
Consequently, we obtain that, for any $i\in[2]$,
\begin{align}
\limsup_{n\to\infty} \bbE_{P_{X^n\tilX_i^n}^{\bz_n}}\big[\Delta_i^{(n)}(X^n,\tilX_i^n)\big]&\le D_i,\\
\lim_{n\to\infty}\big\|P_{\tilX_i^n}^{\bz_n}-\psi_i^n\big\|_\mathrm{TV}&=0.
\end{align}


\subsubsection{Distortion and Perception Analyses}
\label{subapp:dis and per}

Firstly, we show that $\lim_{n\to\infty}\bbE_{\bZ_n}[T(\bZ_n)]=0$. It follows from the definition of
$Q^{\bz_n}_{X^n\tilX_1^n\tilX_2^nS_1S_2K}$ in
\eqref{eq:auxiliary-distribution} and the chain rule that
\begin{align}
&Q^{\bz_n}_{X^nKS_1S_2\tilX_1^n\tilX_2^n}
(x^n,\tilx_1^n,\tilx_2^n,s_1,s_2,k)\nn\\
&=Q_{X^nK}^{\bz_n}(x^n,k)Q_{S_1S_2|X^nK}^{\bz_n}(s_1,s_2|x^n,k)Q_{\tilX_1^n\tilX_2^n|X^nKS_1S_2}^{\bz_n}(\tilx_1^n,\tilx_2^n|x^n,k,s_1,s_2)\nn\\
&=Q_{X^nK}^{\bz_n}(x^n,k)f(s_1,s_2|x^n,k)
\phi_1(\tilx_1^n|s_1,k)\phi_2(\tilx_2^n|s_1,s_2,k),
\label{eq:Q to P}
\end{align}
where \eqref{eq:Q to P} follows from the fact that
\begin{align} 
Q_{S_1S_2|X^nK}^{\bz_n}(s_1,s_2|x^n,k)=\frac{Q_{X^nKS_1S_2}^{\bz_n}(x^n,k,s_1,s_2)}{Q_{X^nK}^{\bz_n}(x^n,k)}=\frac{P_{X|U_1U_2}^n(x^n|u_1^n(s_1,k),u_2^n(s_1,s_2,k))
}{\sum_{\tils_1\in[M_1],\tils_2\in[M_2]}
P_{X|U_1U_2}^n(x^n|u_1^n(\tils_1,k),u_2^n(\tils_1,\tils_2,k))
},
\end{align}
and the definition of $f(s_1,s_2|x^n,k)$ in \eqref{eq:encoder def} and the fact that given $(S_1,S_2,K)=(s_1,s_2,k)$, $(\tilX_1^n,\tilX_2^n)$ are generated conditionally independently according to the decoders $\phi_1$ and $\phi_2$, respectively, and are conditionally independent of $X^n$. It follows from \eqref{eq:actual-distribution} and \eqref{eq:Q to P} that $P_{S_1S_2\tilX_1^n\tilX_2^n|X^nK}^{\bz_n}=Q_{S_1S_2\tilX_1^n\tilX_2^n|X^nK}^{\bz_n}.$
Let $P_K\in\calP(\calK)$ be the distribution of the common randomness
$K$. Consequently, we obtain
\begin{align}
T({\bz_n})
&=\Big\|Q_{X^nKS_1S_2\tilX_1^n\tilX_2^n}^{\bz_n}-P_{X^nKS_1S_2\tilX_1^n\tilX_2^n}^{\bz_n}\Big\|_{\rm TV}\\
&=\Big\|Q_{X^nK}^{\bz_n}-P_X^nP_K\Big\|_{\rm TV}\label{eq:given K-0}\\
&=\frac{1}{M_{\rm c}}\sum_{k\in[M_{\rm c}]}\Big\|Q_{X^n|K=k}^{\bz_n}-P_X^n\Big\|_{\rm TV},
\label{eq:given K}
\end{align}
where \eqref{eq:given K-0} follows from~\cite[Lemma V.2]{Cuff_2013_Distributed_Channel_Synthesis}, \eqref{eq:given K} follows from the fact that $P_K(k)=Q_K^{\bz_n}(k)=\frac{1}{M_{\rm c}}$. Note that, for any fixed $k\in[M_{\rm c}]$ and each $x^n\in\calX^n$,
\begin{align}
Q_{X^n|K=k}^{\bz_n}(x^n)
=\frac{1}{M_1M_2}\sum_{s_1,s_2}
P_{X|U_1U_2}^n(x^n|u_1^n(s_1,k),u_2^n(s_1,s_2,k)).
\end{align}
This is precisely the output distribution obtained by selecting $(s_1,s_2)$ uniformly from the superposition codebook associated with $k$ and passing the corresponding codeword through $P_{X|U_1U_2}^n$. Recall from \eqref{general-region} that $R_1>I(X;U_1)$ and $R_2>I(X;U_1,U_2)$. It follows from the superposition soft-covering lemma \cite[Lemma 4]{Goldfeld_2020_Wiretap_Channels} and the Pinsker inequality \cite[Lemma 11.6.1]{Thomas_2006_Elements-of-information-theory} that, for every fixed $k\in[M_{\rm c}]$,
\begin{align}
\lim_{n\to\infty}\bbE_{\bZ_n}\Big[\big\|Q^{\bZ_n}_{X^n|K=k}-P_X^n\big\|_{\rm TV}\Big]=0.
\label{eq:soft approximate}
\end{align}
Combining \eqref{eq:given K} and \eqref{eq:soft approximate} yields
\begin{align}
\lim_{n\to\infty}\bbE_{\bZ_n}\big[T(\bZ_n)\big]=0.
\label{eq:q and p approx}
\end{align}

Secondly, we show that for any $i\in[2]$, $\bbE_{\bZ_n}[D_i^Q(\bZ_n)]\le D_i$.
For the first-layer reconstruction, averaging over the random codebook ensemble yields
\begin{align}
&\bbE_{\bZ_n}\Bigl[D_1^Q(\bZ_n)\Bigr]\nn\\*
&=\bbE_{\bZ_n}\Bigl[\sum_{x^n,\tilde{x}_1^n}\Delta_1^{(n)}(x^n,\tilde{x}_1^n)Q^{\bZ_n}_{X^n\tilde{X}_1^n}(x^n,\tilde{x}_1^n)\Bigr]\label{D1_1}\\*
&=\sum_{x^n,\tilde{x}_1^n}\Delta_1^{(n)}(x^n,\tilde{x}_1^n)\sum_{\tilde{x}_2^n,s_1,s_2,k}\bbE_{\bZ_n}\Bigl[Q^{\bZ_n}_{X^n\tilde X_1^n\tilde X_2^nS_1S_2K}(x^n,\tilde{x}_1^n,\tilde{x}_2^n,s_1,s_2,k)\Bigr]\label{D1_2}\\*
&=\frac{1}{M_1M_2M_{\rm c}}\sum_{x^n,\tilde{x}_1^n}\sum_{s_1,s_2,k}\Delta_1^{(n)}(x^n,\tilde{x}_1^n)\cdot\bbE_{\bZ_n}\Bigl[P_{X|U_1U_2}^n(x^n|U_1^n(s_1,k),U_2^n(s_1,s_2,k))P^n_{\tilde{X}_1|U_1}(\tilde{x}_1^n|U_1^n(s_1,k))\Bigr]\label{D1_3}\\*
&=\sum_{x^n,\tilde{x}_1^n,u_1^n,u_2^n}
\Delta_1^{(n)}(x^n,\tilde{x}_1^n)
P^n_{X\tilde{X}_1U_1U_2}
(x^n,\tilde{x}_1^n,u_1^n,u_2^n)
\label{D1_4}\\*
&=\sum_{x^n,\tilde{x}_1^n}
\Delta_1^{(n)}(x^n,\tilde{x}_1^n)P_{X\tilde{X}_1}^n(x^n,\tilde{x}_1^n)\label{D1_5}\\*
&=\bbE_{P_{X\tilX_1}}[\Delta_1(X,\tilde{X}_1)]\label{D1_8}\\*
&\leq D_1,\label{D1_9}
\end{align}
where \eqref{D1_1} follows from \eqref{eq:DQ-def}, \eqref{D1_2} follows by marginalization, \eqref{D1_3} follows from \eqref{eq:auxiliary-distribution}, \eqref{D1_4} follows from the fact that $(U_1^n(s_1,k),U_2^n(s_1,s_2,k))\sim P_{U_1U_2}^n$ for any $(s_1,s_2,k)\in[M_1]\times[M_2]\times[M_{\rm c}]$, \eqref{D1_5} follows by marginalizing over $(U_1^n,U_2^n)$, \eqref{D1_8} follows from \eqref{normalized-n-letter-distortion} and the fact that $(X_t,\tilX_{1,t})\sim P_{X\tilX_1}$ for any $t\in[n]$, and \eqref{D1_9} follows from the fact that $P_{XU_1U_2\tilX_1\tilX_2}\in\calM_{\rm SR}(D_1,D_2)$.
Similarly, we obtain
\begin{align}
\bbE_{\bZ_n}\Bigl[D_2^Q(\bZ_n)\Bigr]\leq D_2.\label{D_2}
\end{align}

Finally, we show that for any $i\in[2]$, $\lim_{n\to\infty}\bbE_{\bZ_n}[V_i^Q(\bZ_n)]=0$.
It follows from the definition of $Q^{\bz_n}_{X^n\tilX_1^n\tilX_2^nS_1S_2K}$ in \eqref{eq:auxiliary-distribution} that
\begin{align}
Q^{\bz_n}_{\tilde X_1^n}(\tilde{x}_1^n)
&=
\frac{1}{M_1M_{\rm c}}
\sum_{s_1,k}
P^n_{\tilde{X}_1|U_1}
(\tilde{x}_1^n|u_1^n(s_1,k)),
\label{eq:Q tilX1}\\
Q^{\bz_n}_{\tilde X_2^n}(\tilde{x}_2^n)
&=
\frac{1}{M_1M_2M_{\rm c}}
\sum_{s_1,s_2,k}
P^n_{\tilde{X}_2|U_1U_2}
(\tilde{x}_2^n|u_1^n(s_1,k),u_2^n(s_1,s_2,k)).
\label{eq:Q tilX2}
\end{align}
Let $\bZ_{1,n}:=\{U_1^n(s_1,k):(s_1,k)\in[M_1]\times[M_{\rm c}]\}$ be the first-layer codebook, consisting of $M_1M_{\rm c}$ mutually
independent codewords, each distributed according to $P_{U_1}^n$.
Consequently, \eqref{eq:Q tilX1} is precisely the output distribution
obtained by selecting a codeword uniformly from $\bZ_{1,n}$ and passing it
through $P_{\tilX_1|U_1}^n$.
Recall from \eqref{general-set} that $P_{\tilX_1}=\psi_1$, and from
\eqref{general-region} that
$R_1+R_{\rm c}>I(\tilde{X}_1;U_1)$.
It follows from the soft-covering lemma
\cite[Lemma IV.1]{Cuff_2013_Distributed_Channel_Synthesis} that
\begin{align}
\label{eq:tilX1 soft}
\lim_{n\to\infty}\bbE_{\bZ_n}\big[V_1^Q(\bZ_n)\big]=\lim_{n\to\infty}\bbE_{\bZ_n}\Big[\big\|Q^{\bZ_n}_{\tilde X_1^n} - \psi_1^n\big\|_{\rm{TV}}\Big]=0.
\end{align}
Similarly, let $\bZ_{2,n}:=\{(U_1^n(s_1,k),U_2^n(s_1,s_2,k)):(s_1,s_2,k)\in[M_1]\times[M_2]\times[M_{\rm c}]\}$ be the superposition codebook, consisting of $M_1M_\rmc$ mutually independent codewords, each distributed according to $P_{U_1}^n$, and $M_2$ conditionally independent codewords generated according to $P_{U_2|U_1}^n$. Consequently, \eqref{eq:Q tilX2} is precisely the output distribution obtained by selecting a codeword uniformly from $\bZ_{2,n}$ and passing it through $P_{\tilX_2|U_1U_2}^n$. Recall from \eqref{general-set} that $P_{\tilX_2}=\psi_2$, and from \eqref{general-region} that $R_1+R_{\rm c}>I(\tilX_2;U_1)$ and $R_2+R_{\rm c}>I(\tilX_2;U_1,U_2)$. It follows from the superposition soft-covering lemma \cite[Lemma 4]{Goldfeld_2020_Wiretap_Channels} and the Pinsker inequality \cite[Lemma 11.6.1]{Thomas_2006_Elements-of-information-theory} that
\begin{align}
\lim_{n\to\infty}\bbE_{\bZ_n}\big[V_2^Q(\bZ_n)\big]=\lim_{n\to\infty}\bbE_{\bZ_n}\Big[\big\|Q^{\bZ_n}_{\tilX_2^n}-\psi_2^n\big\|_{\rm TV}\Big]=0.\label{eq:tilX2 soft}
\end{align}
For any $i\in[2]$, combining \eqref{eq:tilX1 soft} and \eqref{eq:tilX2 soft} yields
\begin{align}
\lim_{n\to\infty}\bbE_{\bZ_n}\big[V_i^Q(\bZ_n)\big]=0.\label{eq:Q perception property}
\end{align}

\subsubsection{Existence of Deterministic Codebook}
For any $i\in[2]$, recall that $T(\bZ_n)$, $V_i^Q(\bZ_n)$ and
$D_i^Q(\bZ_n)$ are defined in \eqref{eq:T(C)-def}--\eqref{eq:DQ-def}. 
Fix any $\delta>0$ and define the bad event that a random codebook fails to simultaneously satisfy the desired distributional approximation and distortion bounds as
\begin{align}
\calB_{n,\delta}:=\{T(\bZ_n)>\delta\}\bigcup\big\{\cup_{i\in[2]}\{V_{i}^Q(\bZ_n)>\delta\}\big\}\bigcup\big\{\cup_{i\in[2]}\{D_{i}^Q(\bZ_n)>D_i+\delta\}\big\}.
\end{align}
It follows that, as $n\to\infty$,
\begin{align}
\Pr\{\calB_{n,\delta}\}&\le\frac{\bbE_{\bZ_n}[T(\bZ_n)]}{\delta}+\sum_{i\in[2]}\frac{\bbE_{\bZ_n}[V_i^Q(\bZ_n)]}{\delta}+\sum_{i\in[2]}\frac{\mathrm{Var}(D_i^Q(\bZ_n))}{\delta^2}\label{eq:bn pr to 0-1}\\
&=0,\label{eq:bn pr to 0-2}
\end{align}
where \eqref{eq:bn pr to 0-1} follows from the union bound, the Markov inequality and the Chebyshev inequality, which state that $\Pr\{Y>a\}\leq \frac{\bbE[Y]}{a}$ and $\Pr\{Y-\bbE[Y]>a\}\le\frac{\mathrm{Var}(Y)}{a^2}$ for any nonnegative random variable $Y$ and any $a>0$, respectively, and \eqref{eq:bn pr to 0-2} follows from \eqref{eq:q and p approx}, \eqref{eq:Q perception property} and the fact that $\lim_{n\to\infty}\mathrm{Var}(D_i^Q(\bZ_n))=0$, whose proof is deferred to the end of this part.
Consequently, choosing $\delta$ to decrease sufficiently slowly with $n$ yields a sequence of deterministic codebooks $\{\bz_n\}_{n\in\bbN}$ satisfying \eqref{eq:good code distortion} and \eqref{eq:good code perception} simultaneously. The proof of Lemma \ref{lem:good code} is completed.

\emph{Proof of $\lim_{n\to\infty}\mathrm{Var}(D_i^Q(\bZ_n))=0$}:
Define
\begin{align}
g_1(u_1,u_2)&:=\sum_{x,\tilx_1}P_{X|U_1U_2}(x|u_1,u_2)P_{\tilX_1|U_1}(\tilx_1|u_1)\Delta_1(x,\tilx_1),\\
g_2(u_1,u_2)&:=\sum_{x,\tilx_2}P_{X|U_1U_2}(x|u_1,u_2)P_{\tilX_2|U_1U_2}(\tilx_2|u_1,u_2)\Delta_2(x,\tilx_2).
\end{align}
Furthermore, for any $i\in[2]$, define
\begin{align}
G_i(u_1^n,u_2^n)&:=\frac{1}{n}\sum_{t\in[n]}g_i(u_{1,t},u_{2,t}).\label{eq:def of gi n}
\end{align}
Note that $G_i(u_1^n,u_2^n)$ for any $i\in[2]$ is the conditional average distortion given $(u_1^n,u_2^n)$.
Based on the above definitions, for any $i\in[2]$, it follows from \eqref{eq:auxiliary-distribution} and \eqref{eq:DQ-def} that
\begin{align}
D_i^Q(\bZ_n)
&=\frac{1}{M_1M_2M_\rmc}\sum_{s_1,s_2,k}G_i(U_1^n(s_1,k),U_2^n(s_1,s_2,k)).
\end{align}
Consequently, for any $i\in[2]$, we obtain that as $n\to\infty$,
\begin{align}
\mathrm{Var}(D_i^Q(\bZ_n))
&=\frac{1}{M_1^2M_2^2M_\rmc^2}\sum_{s_1,s_2,k}\sum_{\tils_1,\tils_2,\tilk}\mathrm{Cov}\Big(G_i\big(U_1^n(s_1,k),U_2^n(s_1,s_2,k)\big),G_i\big(U_1^n(\tils_1,\tilk),U_2^n(\tils_1,\tils_2,\tilk)\big)\Big)\label{eq:cal var dq-1}\\
&=\frac{1}{M_1M_2M_\rmc}\mathrm{Var}\Big(G_i\big(U_1^n(1,1),U_2^n(1,1,1)\big)\Big)\nn\\
&\qquad+\frac{M_2-1}{M_1M_2M_\rmc}\mathrm{Cov}\Big(G_i\big(U_1^n(1,1),U_2^n(1,1,1)\big),G_i(U_1^n(1,1),U_2^n(1,2,1))\Big)\label{eq:cal var dq-2}\\
&=\frac{1}{nM_1M_2M_\rmc}\mathrm{Var}\big(g_i(U_1,U_2)\big)+\frac{M_2-1}{nM_1M_2M_\rmc}\mathrm{Var}\big(\bbE[g_i(U_1,U_2)|U_1]\big)\label{eq:cal var dq-3}\\
&\le \frac{\Delta_{\max}^2}{4nM_1M_\rmc}\label{eq:cal var dq-4}\\
&=0,\label{eq:cal var dq-5}
\end{align}
where \eqref{eq:cal var dq-1} follows from the variance expansion, \eqref{eq:cal var dq-2} follows from the symmetry of the random codebook and the fact that the covariance vanishes whenever $(s_1,k)\neq(\tils_1,\tilk)$, \eqref{eq:cal var dq-3} follows from \eqref{eq:def of gi n} and the fact that $\{U_{1,t}(1,1)\}_{t\in[n]}$ are i.i.d. and $U_2^n(1,1,1)$, $U_2^n(1,2,1)$ are conditionally independent given $U_1^n(1,1)$, \eqref{eq:cal var dq-4} follows from the Popoviciu inequality on variances~\cite[Eq.~(4)]{Lim_2022_Geometrical-Bounds-for-Variance}, which states that $\mathrm{Var}(Y)\le \frac{(b-a)^2}{4}$ for any random variable $Y$ supported on $[a,b]$, and the fact that $g_i(u_1,u_2)\in[0,\Delta_{\max}]$ and $\bbE[g_i(U_1,U_2)|U_1]\in[0,\Delta_{\max}]$, and \eqref{eq:cal var dq-5} follows from the fact that $\Delta_{\max}$ is bounded, $(M_1,M_\rmc)\in\mathbb N^2$ and $n\to\infty$.

\bibliographystyle{IEEEtran}
\bibliography{Reference}
\end{document}